\newtheorem{proposition}{Proposition}
\newcounter{noqed}
\newcommand{\qed}{ \ifmmode\text{ }\fi\rule[-.05em]{.3em}{.7em}\setcounter{noqed}{0}}
\newenvironment{proof}{\par\smallskip\noindent{\bf Proof.}\setcounter{noqed}{1}}{\ifnum\value{noqed}=1\qed\fi\par\medskip}
\newcommand{\?}{\mskip1.5mu}
\newcommand{\rank}{\operatorname{\textsf{rank}}}
\newcommand{\select}{\operatorname{\textsf{select}}}
\newcommand{\add}{\operatorname{\textsf{add}}}
\newcommand{\push}{\operatorname{\textsf{push}}}
\newcommand{\pop}{\operatorname{\textsf{pop}}}
\newcommand{\prefix}{\operatorname{\textsf{prefix}}}
\newcommand{\find}{\operatorname{\textsf{find}}}
\newcommand{\findc}{\overline{\operatorname{\textsf{find}}}}
\newcommand{\N}{\mathbf N}
\def\..{\, \mathpunct{\ldotp\ldotp}} 
\newcolumntype{x}[1]{>{\centering\arraybackslash\hspace{0pt}}p{#1}}
\begin{document}
\bibliographystyle{plain}

\title{Compact Fenwick trees for dynamic ranking and selection}

\author{Stefano Marchini\and Sebastiano Vigna\\Dipartimento di Informatica, Universit\`a degli
Studi di Milano, Italy}

\maketitle

\begin{abstract}
The Fenwick tree~\cite{FenNDSCFT} is a classical implicit
data structure that stores an array in such a way that modifying an element,
accessing an element, computing a prefix sum and performing a predecessor search
on prefix sums all take logarithmic time. We introduce a number of variants
which improve the classical implementation of the tree: in particular, we can
reduce its size when an upper bound on the array element is known, and we can
perform much faster predecessor searches. Our aim is to use our variants to
implement an efficient \emph{dynamic bit vector}:
our structure is able to perform updates, ranking and selection in logarithmic
time, with a space overhead in the order of a few percents, outperforming
existing data structures with the same purpose. Along the way, we highlight the
pernicious interplay between the arithmetic behind the Fenwick tree and the
structure of current CPU caches, suggesting simple solutions that improve
performance significantly.
\end{abstract}

\section{Introduction}

The problem of building static data structures which perform
rank and select operations on vectors of $n$ bits in constant time using additional
$o(n)$ bits has received a great deal of attention in the last two
decades starting form Jacobson's seminal work on \emph{succinct
data structures}.~\cite{JacSSTG} The \emph{rank} operator takes a position in
the bit vector and returns the number of preceding ones. The \emph{select} operation returns the position of the $k$-th one in the vector, given $k$. These two operations are at
the core of most existing succinct static data structure, as those representing
sets, trees, and so on. Another line of research studies analogous
\emph{compact} structures that are more practical, but they use $cn$
additional bits, for a usually small constant $c$.\cite{VigBIRSQ}

A much less studied problem is that of implementing \emph{dynamic} rank and
select operators. In this case we have again an underlying bit vector, which
however is dynamic, and it is possible to change its size and its content. Known
lower bounds~\cite{BeFOBPPRP} tell us that dynamic ranking and selection cannot
be performed faster than $\Omega(\lg n/\lg w)$, where $w$ is the word size. Some
theoretical data structures match this bound, but they are too complex to be implemented in practice
(e.g., they have unpredictable tests or memory-access patterns which are not cache-friendly).

It is a trivial observation that by keeping track in counters of the number of
ones in blocks of $q$ words rank and selection can be performed first on the
counters and then locally in each block. In particular, any dynamic data
structure that can keep track of such blocks and quickly provides prefix sums and
predecessor search in the prefix sums (i.e., find the last block whose prefix
sum is less than a given bound) can be used to implement dynamic ranking and
selection.

In this paper we consider the Fenwick tree,~\cite{FenNDSCFT} an implicit data
structure that was devised to store a sequence of integers, providing
logarithmic-time updates, computation of prefix sums, and
predecessor searches (into the prefix sums). Its original goal was efficient
dynamic maintenance of the numerosity of the symbols seen in a stream for
the purpose of performing arithmetic compression.

Our aim is to improve the Fenwick tree in general, keeping in mind the
idea of using it to implement a dynamic bit vector. To this purpose, first we
will show how to \emph{compress} the tree, using additional knowledge we might have on the
values stored in the tree, and introduce a \emph{complemented search} operation
that is necessary to implement selection on zeros; then, we propose a different,
level-order layout for the tree. The layout is very efficient
(cachewise) for predecessor search (and thus for selection), whereas
the classical Fenwick layout is more efficient
for prefix sums (and thus for ranking).

We then discuss the best way to use a Fenwick tree to support a dynamic bit
vector, and argue that due to the current CPU structure the tree should
keep track approximately of the number of ones in one or two cache lines, as ranking
and selection in a cache line can be performed at a very high speed using
specialized CPU instructions. Finally, we perform a wide range of experiments
showing the effectiveness of our approach, and compare it with previous
implementations.

All the code used in this paper is available from the authors under the
GNU Lesser General Public License, version 3 or later, as part of the
Sux project (\url{http://sux.di.unimi.it/}).

\section{Notation}

We use $w$ to denote the machine word size, $\lg$ for the binary logarithm,
$\&$, $|$ and $\oplus$ to denote bitwise and, or and xor on integers,
$\gg$ and $\ll$ for right and left shifting, and an overline for bitwise
negation (as in $\bar x$).
Following Knuth,\cite[7.1.3]{KnuACPIV} we use $\rho x$ to denote the \emph{ruler
function} of $x$, that is, the index (starting from zero) of the lowest bit set ($\rho0=\infty$), $\lambda x$ for
the index of the highest bit set (i.e., $\lambda x = \lfloor \lg x \rfloor$; $\lambda x$ is undefined when $x\leq 0$),
and $\nu x$ for the \emph{sideways sum} of $x\geq 0$, that is, the number of
bits set to one in the binary representation of $x$. Note the easy mnemonics:
$\rho x$ is the position of the \emph{rightmost} one, $\lambda x$ is the position of the \emph{leftmost} one, and $\nu x$ is the
\emph{number} of ones in the binary representation of $x$. Alternative common
names for these functions are LSB, MSB and \emph{population count}, respectively.

Let $\bm v=\bigl\langle v_1,v_n,\ldots,v_n\bigr\rangle$ be a vector of $n$
natural numbers indexed from one.
We define the operations
\begin{align}
\prefix_{\bm{v}}(p) &= \sum_{i=1}^{p} v_i & 0 \leq p \le n \label{get}\\
\find_{\bm{v}}(x) &= \max \bigl\{\? p \mid \prefix_{\bm v}(p) \le x\?\bigr\}
&x\in\N\label{find}
\end{align}
that is, the sum of the prefix of length $p$ of $\bm v$ and
the length of the longest prefix with sum less than or equal to $x$.

Finally, let $\bm b =\bigl\langle b_0,b_1,\ldots,b_{n-1}\bigr\rangle$ be an
array of $n$ bits indexed from zero.
We define
\begin{align*}
\rank_{\bm{b}}(p) &= \bigl| \bigl\{ i \in[0\..p)\mid  b_i = 1
\bigr\} \bigr| & 0 \leq p \le n\\
\select_{\bm{b}}(k) &= \max\bigl\{\?p \in[0\..n)\mid
\text{rank}_{\bm{b}}(p) \le k \?\bigr\} & 0 \le k < \text{rank}_{\bm{b}}(n)
\end{align*}
Thus, $\rank_{\bm{b}}(p)$ counts the number of ones up to position $p$
(excluded), while $\select_{\bm{b}}(k)$ returns the position of the $k$-th one,
indexed starting from zero.

We remark that our choice of indexing is driven by the data structures we
will describe: the Fenwick tree is easiest to describe using vectors indexed
from one, whereas ranking and
selection are much simpler to work with when the underlying bit vector is
indexed from zero.

\section{Related Work}

Several papers in the area of succinct data structures discuss the
\emph{Searchable Prefix Sum} problem, which is the same problem solved by the
Fenwick tree.\cite{RRRSDDS,MaNDECSFTI,HSSSDSSPS} However, as we discussed in the
introduction, these solutions, while providing strong theoretical guarantees,
do not yield practical improvements. Bille~\textit{et al.}~\cite{BCPSPSFT}
is the work in spirit most similar to ours, in that they study succinct representations
of Fenwick trees, extending moreover the construction beyond the binary case:
in particular, they study, as we do, level-order layouts. However, also in this
case the authors aim at asymptotic bounds, rather than dealing with the
practicalities of a tuned implementation.

\section{The Fenwick tree}
\label{sec:tree}

The Fenwick tree~\cite{FenNDSCFT} is an implicit dynamic data structure that
represents a list of $n$ natural numbers and provides logarithmic-time
\textsf{prefix} and \textsf{find} operations; also updating an element of the
list by adding a constant takes logarithmic time. There is an implicit
operation that retrieves the $k$-th value by computing $\prefix(k)-\prefix(k-1)$,
but Fenwick notes that this operation can
be implemented more efficiently (on average).

Strictly speaking, the Fenwick tree is not a tree; that is, it cannot be
univocally described as a set of nodes tied up by a single parent relationship.
There are three different
parent-child relationships (on the same set of nodes) that are useful for
different primitives (see Figure~\ref{fig:tree}).

More precisely, for a vector $\bm v$ of $n$ values
the content of the nodes of a Fenwick tree are gathered in a vector
of \emph{partial sums}~$\bm f$, again of size $n$, storing in position $j\in[1\..n]$ the sum of
values in $\bm v$ with index in $\bigl(j - 2^{\rho j}\..j\bigr]$. In particular, for odd
$j$ we have $f_j=v_j$.

We are going now to describe the three different
parent-child relationships. In doing so, we show an interesting duality law and
introduce new rules to perform prefix queries and updates in a descending
fashion, as opposed to the original ascending technique described by Fenwick.

\begin{figure}
\centering
\includegraphics{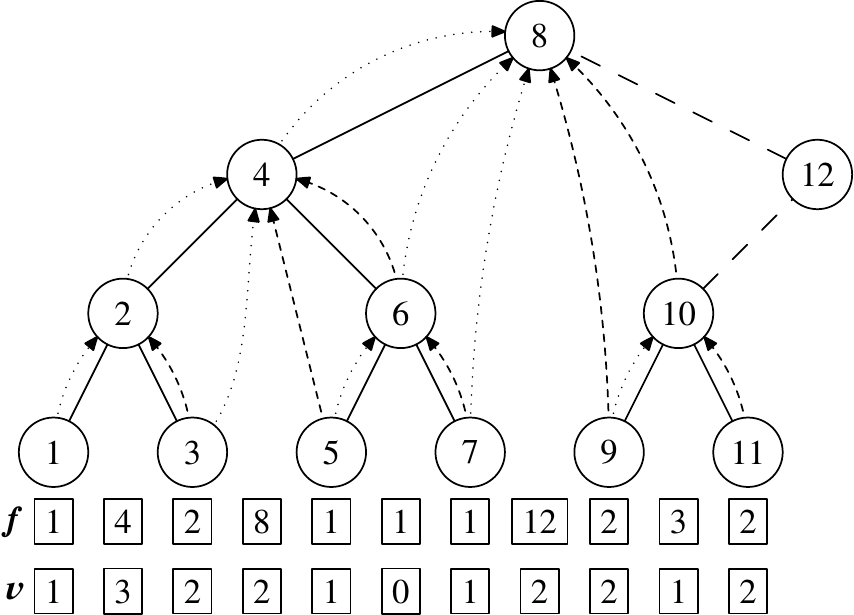}
\caption{\label{fig:tree}A Fenwick tree represented as a sideways
heap with $11$ nodes: note that we need to add an implicit node $12$ to provide
a parent for node $10$. The dashed arrows show the parent relationship of the
interrogation tree (omitting the root $0$), while the dotted arrows show the
parent relationship of the update tree (omitting the parent arrows with target
larger than $11$). The vector $\bm v$ contains the original values,
whereas the vector $\bm f$ represents implicitly the Fenwick tree: every value is
associated with the node above it.}
\end{figure}

\subsection{The search tree}

The \emph{search tree} is actually a
\emph{sideways heap}, a data structure which had been introduced previously by
Dov Harel,~\cite{HarEATBT} unknown to Fenwick. A detailed description is given
by Knuth,\cite[7.1.3]{KnuACPIV} who describes sideways heaps using an infinite set of
nodes: here we extend this description to the Fenwick tree, as it highlights a
number of beautiful symmetries and dualities. The finite structure of a Fenwick
tree of $n$ nodes is then obtained essentially by considering only the
parent-child relationships between the nodes $[1\..n]$.

The infinite sideways heap has an infinite number
of leaves given by the odd numbers. The parents of the leaves are the
odd multiples of $2$ (i.e., even numbers not divisible by $4$); their
grandparents are the odd multiples of $4$, and so on. Node $j$ is exactly $\rho
j$ levels above the leaf level.

More precisely, the parent of node $j$ is given by
\[\bigl(j - (1 \mathbin\ll \rho j)\bigr) \mathbin| \bigl(1 \mathbin\ll (\rho j +
1)\bigr),\] which in two's complement notation can be computed by
\[\bigl(j - (j \mathbin\& -j)\bigr) \mathbin| \bigl((j \mathbin\& -j)
\mathbin\ll 1\bigr)=\bigl(j \mathbin\& (j-1)\bigr) \mathbin| \bigl((j \mathbin\&
-j) \mathbin\ll 1\bigr).\]
The children of node $j$ are $j \pm\bigl(1\mathbin\ll (\rho
j - 1)\bigr)$ , which in two's complement notation is $j \pm \bigl((j \mathbin\&
-j) \mathbin\gg 1\bigr)$.

The infinite sideways heap has no root, so it is not technically a binary tree,
but if we restrict ourselves to a finite number $n$ of the form $2^k-1$
we obtain a perfect binary tree with root
$1\mathbin\ll\lambda n = 2^{\lambda n}$.
Otherwise, we have to add (implicitly) some nodes
beyond $n$ to connect the nodes $\bigl[2^{\lambda n} + 1\.. n\bigr]$ to the root: for example, in
Figure~\ref{fig:tree} we show (continuous lines) a sideways heap with $n=11$, but we have to add
node $12$ to connect the three-node right subtree to the root. When discussing
the height or depth of a node of a Fenwick tree we will always refer to the
associated sideways heap. Note that the heap has height $\lambda n$.


In terms of partial sums, a node $j$ at
height $h$ stores the sum of a subsequence of $2^h$ values of $\bm v$ that
correspond exactly to node $j$ plus its left subtree.
Nodes with index larger than $n$ \emph{do not store any partial sum}, and
indeed they are represented only implicitly.

\begin{figure}
\centering
\begin{minipage}[b]{7	cm}
  \begin{algorithmic}[1]
    \Function{$\mathrm{find}$}{$k$}
      \State $p \gets 0$
      \State $q \gets 1 \mathbin\ll \lambda n$
      \While{$q \neq 0$}
      \label{line:ignore}
        \If{$p + q \leq n$}
          \State $m \gets f[p+q]$
          \label{comp_val}
          \If{$k \ge m$}
            \State $p \gets p + q$
            \State $k \gets k - m$
          \EndIf
        \EndIf
        \State $q \gets q \mathbin\gg 1$
      \EndWhile
      \State \Return{$\langle p, k\rangle$}
    \EndFunction
  \end{algorithmic}
\end{minipage}\includegraphics{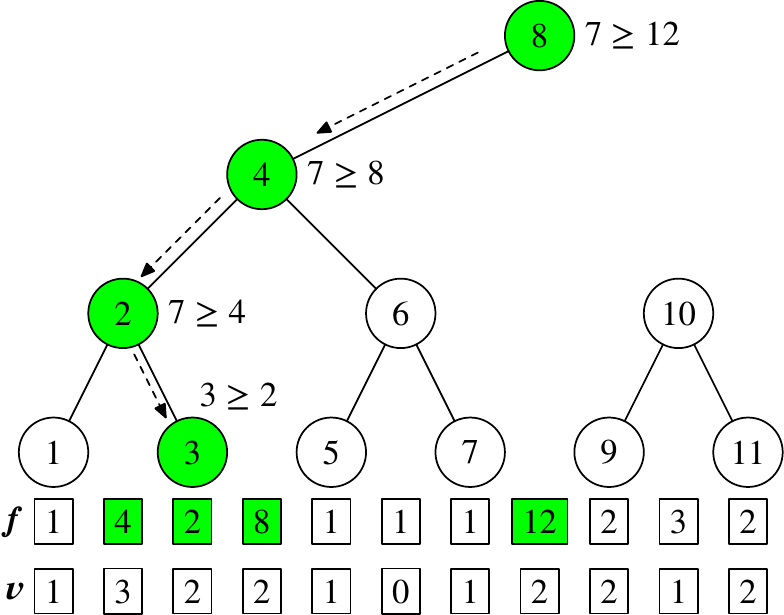}
\caption{\label{fig:find}Pseudocode for \textsf{find}, and a sample execution on the tree
of Figure~\ref{fig:tree}: we are looking for the rightmost position with prefix
sum at most $7$. We highlight the visited nodes and the associated elements in the
vector $\bm f$ representing the tree. The test is against the element of $\bm f$ associated with the node. The returned
pair is $\langle 3,1\rangle$.}
\end{figure}

The $\operatorname{\textsf{find}}_{\bm f}(x)$ operation is now simply a standard
search on the sideways heap: we start at the root $\lambda n$,
thus knowing the sum of the first $2^{\lambda n}$ values, and then move to the
left or right child depending on whether $x$ is smaller or greater than that sum. Note that when we move to the
left child $x$ remains unchanged, but when we move to the right child we have to
subtract from $x$ the partial sum stored in the current node. Non-existing nodes
are simply ignored (we follow their left child; see line~\ref{line:ignore} of
Figure~\ref{fig:find}).

Fenwick shows that $\operatorname{\textsf{find}}_{\bm f}(x)=\operatorname{\textsf{find}}_{\bm v}(x)$.
Figure~\ref{fig:find} illustrates with an example the sequences of nodes traversed
by a \textsf{find} operation.

\subsection{The interrogation tree}

To compute a prefix sum on $\bm v$ we use the \emph{interrogation tree}. The
tree has an implicit root $0$. The parent of node $j$ is
\[\operatorname{\textsf{int}}(j) = j - (1 \mathbin\ll \rho j) = j \mathbin\&
(j-1),\] that is, $j$ with the lowest one cleared; note that the resulting tree
is not a binary tree (in fact, the root has infinite degree).
The primitive $\operatorname{\textsf{prefix}}_{\bm f}(j)$ accumulates partial sums on the path from node
$j$ to the root (excluded), and Fenwick shows that
$\operatorname{\textsf{prefix}}_{\bm f}(p) = \operatorname{\textsf{prefix}}_{\bm v}(p)$.
Figure~\ref{fig:int} illustrates with two examples the sequences of nodes traversed
by a \textsf{prefix} operation.

Note that we use the
same name for the abstract operation on the list and the implementation in a
Fenwick tree to avoid excessive proliferation on names. The subscript should
always make the distinction clear.

\begin{figure}
\centering
\begin{tabular}{cc}
\begin{minipage}[b]{7cm}
  \begin{algorithmic}[1]
    \Function{$\mathrm{prefix}$}{$j$}
      \State $p \gets 0$
      \While{$j \neq 0$}
        \State $p \gets p + f[j]$
        \State $j \gets j \mathbin\& (j-1)$
      \EndWhile
      \State \Return{$p$}
    \EndFunction
  \end{algorithmic}
  \vspace*{1cm}
\end{minipage}&
\includegraphics{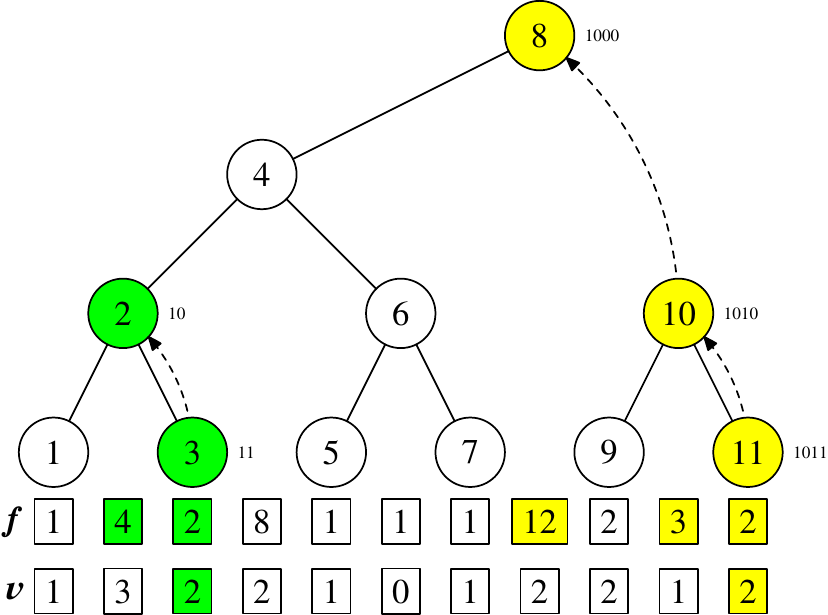}
\end{tabular}
\caption{\label{fig:int}Pseudocode for \textsf{prefix}, and two sample executions on the tree
of Figure~\ref{fig:tree}: on the left the nodes traversed computing the
sum of the first three elements, and on the right the nodes traversed computing
the sum of the first eleven (i.e., all) elements. The value returned is the
sum of the elements of $\bm f$ associated with the visited nodes. Note how the sequence of
node indices is obtained by deleting the lowest bit set.}
\end{figure}

We remark that the interrogation tree can be also scanned in
a \emph{top-down} fashion, that is, from the root to the leaves. The
sequence of values of $j$ that reaches the leaf $p$ starting from the root
(i.e., $j = 0$) is given by the rule \[j \leftarrow j \mathbin| \bigl(1\ll\lambda(j
\oplus p)\bigr).\] The
rule adds to $j$ the bits sets in $p$ one by one, from the most significant one
to the least significant one, thus selecting at each step the correct child.

\subsection{The update tree}

Finally, when modifying $v_j$ by adding the value $c$, Fenwick shows that we
have to update $\bm f$ by adding $c$ to all nodes along a path going up from
node $j$ in the \emph{update tree}.
In this tree, the parent of node $j$ is \[\operatorname{\textsf{upd}}(j) = j +
(1 \mathbin\ll \rho j) = j + (j \mathbin\& -j).\] If $j$ is an odd multiple of
$2^a$, that is, $j=c2^a$ with $c$ odd, its parent is $(c+1)2^a$.  We stop
updating when the next node to update is beyond $n$. We call this operation
$\operatorname{\textsf{add}}_{\bm f}(j, x)$. Figure~\ref{fig:upd} illustrates
with two examples the sequences of nodes traversed by a \textsf{add} operation.

\begin{figure}
\centering
\begin{tabular}{cc}
\begin{minipage}[b]{7cm}
  \begin{algorithmic}[1]
    \Function{$\mathrm{add}$}{$c,j$}
      \While{$j \le n$}
        \State $f[j] \gets f[j] + c$
        \State $j \gets j + (j \mathbin\& -j)$
      \EndWhile
    \EndFunction
  \end{algorithmic}
  \vspace*{1cm}
\end{minipage}&
\includegraphics{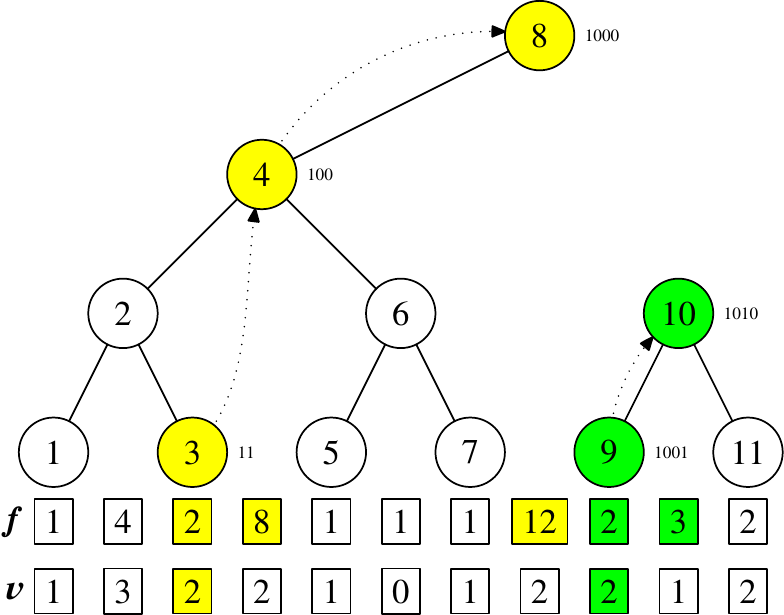}
\end{tabular}
\caption{\label{fig:upd}Pseudocode for \textsf{add}, and two sample executions on the tree
of Figure~\ref{fig:tree}: on the left the nodes traversed
when updating the third element, and on the right the nodes traversed updating
the ninth element. During the traversal, the argument $c$ is added to all elements
of $\bm f$ associated with the visited nodes. Note how the sequence of node indices is obtained by adding the lowest bit set.}
\end{figure}

Note that technically the update tree is a forest, but for convenience and
historical reasons we still refer to it as a tree. More precisely, as in the
case of sideways heaps we could add implicit nodes $(n\.. 1 \ll (\lambda n +
1)]$ containing no partial sum which would make the forest into a tree.

It is easy to prove the following remarkable duality:
\begin{proposition}
$\operatorname{\textsf{int}}(-j) = -\operatorname{\textsf{upd}}(j)$.
\end{proposition}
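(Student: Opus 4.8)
The plan is to reduce both sides to a common closed form using the \emph{arithmetic} definitions $\operatorname{\textsf{int}}(j) = j - (1 \mathbin\ll \rho j)$ and $\operatorname{\textsf{upd}}(j) = j + (1 \mathbin\ll \rho j)$ rather than the bitwise ones. Expanding the right-hand side gives $-\operatorname{\textsf{upd}}(j) = -j - (1 \mathbin\ll \rho j)$, while the left-hand side is $\operatorname{\textsf{int}}(-j) = -j - (1 \mathbin\ll \rho(-j))$. Hence the proposition is equivalent to the single claim that $\rho(-j) = \rho j$, i.e.\ that negating an integer in two's complement leaves the position of its rightmost one unchanged.

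First I would establish $\rho(-j) = \rho j$. The quickest route uses $-j = \bar j + 1$: all bits of $j$ below position $\rho j$ are zero, so in $\bar j$ they are one and the bit at position $\rho j$ is zero; adding one then propagates a carry through positions $0,\dots,\rho j - 1$, clearing them and setting position $\rho j$, while the bits above position $\rho j$ are untouched. Thus the lowest set bit of $-j$ is again at position $\rho j$. Equivalently, this is exactly the familiar fact that $j \mathbin\& -j$ isolates the lowest set bit, so that $(-j)\mathbin\& -(-j) = (-j)\mathbin\& j = j \mathbin\& -j$, whence $\rho(-j)=\rho j$.

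Substituting this into the expression for $\operatorname{\textsf{int}}(-j)$ finishes the proof:
\[
\operatorname{\textsf{int}}(-j) = -j - \bigl(1 \mathbin\ll \rho(-j)\bigr) = -j - \bigl(1 \mathbin\ll \rho j\bigr) = -\bigl(j + (1 \mathbin\ll \rho j)\bigr) = -\operatorname{\textsf{upd}}(j).
\]

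The step I would be most careful about is the bit-level argument for $\rho(-j) = \rho j$: one should note that $j$ ranges over node indices, so $j \ge 1$ and $\rho j$ is finite, and that the manipulations take place in two's complement on a word wide enough to represent all relevant indices (equivalently, among the $2$-adic integers), where the identity $-j = \bar j + 1$ holds verbatim. Given that, everything else is a one-line calculation.
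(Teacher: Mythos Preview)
Your proof is correct and follows essentially the same route as the paper: both reduce the identity, via the arithmetic definitions of $\operatorname{\textsf{int}}$ and $\operatorname{\textsf{upd}}$, to the fact $\rho(-j)=\rho j$, and then conclude by a one-line substitution. The only difference is that the paper simply asserts $\rho j=\rho(-j)$ as known, whereas you spell out the carry-propagation (equivalently, $j\mathbin\& -j$) argument for it.
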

\begin{proof}
Immediate, as $\rho j = \rho (-j)$ and $\operatorname{\textsf{int}}(-j) =(-j) - (1\ll \rho (-j))=
-(j + (1\ll \rho j)) = -\operatorname{\textsf{upd}}(j)$.
\end{proof}

The equation above can be used to provide a top-down scanning towards the leaf
$p$, analogously to the case of \textsf{int}.
The rule is simply \[j \leftarrow  -\bigl(-j \oplus
(1\ll\lambda(-j \oplus -p))\bigr),\] starting from $j=n\mathbin\& \bigl(-1 \ll
\lambda\bigl(n\oplus p \oplus (1 \ll \rho p)\bigr)$, which in two's complement
arithmetic can be written as $n\mathbin\& (-1 \ll
\lambda\bigl(n\oplus (p \mathbin\& (p - 1))\bigr)$.
Alternatively, using the convention that $\lambda0=-1$ and that
negative shifts are shifts in the opposite direction one can
express the starting node as $j=n\mathbin\& \bigl(-1 \ll
\lambda(n\oplus p)\bigr)$.

While apparently the top-down rule for \textsf{upd} requires many more
operations, we can in fact perform the whole scan using negative indices: once
the initial node has been set up in this way, the only difference between the
two algorithms is that we have to remember to negate the result of the update
rule to obtain the actual node of the tree.

\subsection{Bounded lists and complemented find}
\label{sec:compfind}

Departing from the original definition, we will also assume the existence of an
upper bound $B$ on the values in the list represented by the tree (i.e.,
$v_i\leq B$).
Note that this is a not bound on the partial sums, as
they are sums of such values, but it is easy to see that the upper bound for the
partial sum of index $j$ will be $2^{\rho j}B$, so one needs
$\bigl\lceil\lg\bigl(2^{\rho j}B+1\bigr)\bigr\rceil$ bits
to store it.

Using the bound $B$, we now define an additional operation, \emph{complemented
find}, by
\[
\findc_{\bm v}(x) = \max \bigl\{\? p \mid pB - \prefix_{\bm v}(p) \le
x\?\bigr\} \qquad x\in\N.
\]
This operation is not relevant when using a Fenwick tree to keep track of prefix
sums, but will be essential to implement selection on zeros.

An implementation of complemented find is given in~\cref{findc}.
Note that the only difference with a standard find (Figure~\ref{fig:find}) happens at \cref{comp_val}:
instead of considering the value of a node, we compute the difference
with the maximum possible value stored at the node.

For both versions of find, we actually consider \emph{two} return values:
the length of the longest prefix with (complemented) sum less than or equal $x$,
\emph{and} the excess with respect to said (complemented) prefix sum. Once again,
in the case the Fenwick tree is used just to keep track of prefix sums the excess
is not particularly useful, but it will be fundamental in implementing
selection primitives.

\begin{algorithm}
  \caption{Complemented find for a Fenwick tree $\bm f$ with bound $B$.}
  \label{findc} \begin{algorithmic}[1]
    \Function{$\overline{\mathrm{find}}$}{$k$}
      \State $p \gets 0$
      \State $q \gets 1 \mathbin\ll \lambda n$
      \While{$q \neq 0$}
        \If{$p + q \leq n$}
        \label{line:ignore}
          \State $m \gets B2^{\rho(p+q)} - f[p+q]$
          \label{comp_val}
          \If{$k \ge m$}
            \State $p \gets p + q$
            \State $k \gets k - m$
          \EndIf
        \EndIf
        \State $q \gets q \mathbin\gg 1$
      \EndWhile
      \State \Return{$\langle p, k\rangle$}
    \EndFunction
  \end{algorithmic}
\end{algorithm}

\section{Compression and layout}

Recent computer hardware features a growing performance gap
between processor and memory speed. Keeping most of the computation into the
cache can increase significantly the performance of data structure.
Thus, \emph{compressed}, \emph{compact} or \emph{succinct} data structures
may require more complex operation for queries and updates, but, at the
same time, by storing information more compactly they make it possible to
increase performance by using the cache more efficiently.

We have explored several different forms of compression in order to balance the
trade-off between the time required to retrieve uncompressed information and the
time required to access to such data from the physical memory. After an
experimental analysis we are reporting here the two most relevant ones.

\subsection{Bit compression}
\label{sec:bitcomp}

As we explained in Section~\ref{sec:tree}, we assume to have a bound $B$ on the
values $v_i$ represented by the tree; let $S=\lceil\lg(B+1)\rceil$.
In our implementation, we use $S$
bits for the leaves, $S+ 1$ bits for the nodes at level
one, and so on. In general, we use $S+\rho j$ bits to store the
partial sum at node $j$ (which is slightly redundant, in particular when $B$ is a power of two).
\begin{figure}
\centering
\includegraphics{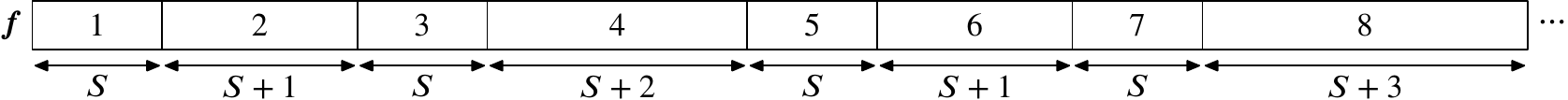}
\caption{\label{fig:bit}The bit-compressed Fenwick layout. Each box contains the
index $j$ of the associated node, and the width in bits of each box is $S+\rho j$.}
\end{figure}

To be able to fully exploit bit compression, we need to lay out all partial sums
consecutively in a bit array: an example is shown in Figure~\ref{fig:bit}. A useful fact is that if we assume we can compute in
constant time sideways sums, then we can compute in constant time the sum of the
bit sizes of the first $j$ partial sums.

\begin{proposition}
\label{prop:formula}
For $0\leq j\leq n$ we have
\[
\sum_{i=1}^{j} \bigl(S+\rho i ) = j\cdot (S +1) - \nu j.
\]
\end{proposition}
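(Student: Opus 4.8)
The plan is to peel off the part of the sum that does not depend on the index and reduce the statement to a classical fact about the ruler function. Since
\[
\sum_{i=1}^{j}\bigl(S+\rho i\bigr) = jS + \sum_{i=1}^{j}\rho i ,
\]
and $jS + (j-\nu j) = j\cdot(S+1) - \nu j$, the whole proposition is equivalent to the identity $\sum_{i=1}^{j}\rho i = j - \nu j$. So I would state this as a small lemma and spend the rest of the argument on it.

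To prove $\sum_{i=1}^{j}\rho i = j - \nu j$ I see two clean routes. The first is to note that $\rho i$ is exactly the exponent of $2$ in the factorization of $i$, so $\sum_{i=1}^{j}\rho i$ is the $2$-adic valuation of $j!$; by Legendre's formula this equals $\sum_{k\ge 1}\lfloor j/2^{k}\rfloor$, and the standard digit-counting argument rewrites that sum as $j-\nu j$. The second, more self-contained, route is induction on $j$: the base case $j=0$ is the empty sum with $\nu 0 = 0$; for the step, let $t$ be the number of trailing ones in the binary representation of $j$, so that incrementing $j$ clears those $t$ ones and sets the next bit, giving $\rho(j+1)=t$ and $\nu(j+1)=\nu j - t + 1$, hence $\rho(j+1) = 1 + \nu j - \nu(j+1)$. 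Adding this to the inductive hypothesis $\sum_{i=1}^{j}\rho i = j-\nu j$ yields $\sum_{i=1}^{j+1}\rho i = (j+1)-\nu(j+1)$, which closes the induction. I would present the inductive version, since it keeps the write-up elementary and matches the bit-level flavour of the rest of the section.

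There is essentially no hard step here: the only thing needing a moment of care is the bookkeeping of the binary carry when passing from $j$ to $j+1$ (equivalently, the verification that $\lfloor j/2^{k}\rfloor$ counts the multiples of $2^{k}$ in $[1\..j]$ in the Legendre approach). Everything else is routine algebraic rearrangement.
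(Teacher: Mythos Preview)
Your reduction is exactly what the paper does: it observes that the only nontrivial part is the identity $\sum_{i=1}^{j}\rho i = j - \nu j$. The paper simply cites Knuth~\cite[7.1.3]{KnuACPIV} for that identity rather than proving it, so your inductive (or Legendre) argument is a strictly more detailed version of the same approach.
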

\begin{proof}
The only part that needs proof is $\sum_{i=1}^{j} \rho i = j - \nu j$, which is
noted by Knuth.~\cite[7.1.3]{KnuACPIV} 
\end{proof}

Note that the number
of possible configurations of the tree is $(B+1)^n$, so the
information-theoretical lower bound to store the tree is $\lceil n\lg
(B+1)\rceil$ bits.
Adding up on all nodes, similarly to Proposition~\ref{prop:formula} we obtain
\[ \sum_{i=1}^n(S + \rho i)
= n S + n - \nu n
< n (1 + S)
= n (1 + \lceil\lg(B+1)\rceil)\leq
2n \lg(B+1)\leq
2\lceil n\lg(B+1)\rceil .\]
Thus, our bit-compressed Fenwick tree
qualifies as a \emph{compact} data structure---one that uses space that is a
constant number of times the information-theoretical lower bound.

For $B=1$ and $n$ a power of two our bound is
tight, but for larger values of $B$ it is rather rough: for example, when $B=63$ we use less than $7n$ bits, whereas the information-theoretical
lower bound is $6n$. In fact, for $B>3$ a better bound is
\[\sum_{i=1}^n(S + \rho i)\leq n (1 + \lceil\lg(B+1)\rceil)\leq 2n +\lceil n\lg(B+1)\rceil,\]
which shows that we are actually losing at most two bits per element with respect to the lower bound; this bound
is tight when $B$ is a sufficiently large power of $2$.
%

\subsubsection{Byte compression}
\label{sec:bytecomp}

Another possibility, trading (in principle) space for speed, is to round up the
size of each node to its closer byte. In this setting, the partial sum for node $j$ requires
$\bigl\lceil \bigl(\rho j + S\bigr)/8\bigr\rceil$ bytes and starts at byte
\[
\sum_{k=1}^{w/8} i \left\lceil \frac{S}{8k} \right\rceil.
\]
The main problem of this approach is that we cannot provide a closed-form
formula like that given in Proposition~\ref{prop:formula}. To compute the
starting byte of node $j$ we need to establish how many nodes using at least $k$
bytes appear before $j$.

To overcome this issue, we suggest an even looser compression strategy:
instead of rounding up the space required by each node to the next byte, one can
use just three byte sizes for partial sums: $\left\lceil (\lg (B+1)) / 8
\right\rceil$, $\left\lceil (\lg (B+1)) / 8 \right\rceil + 1$ or full size (no
compression). Let $b = \left\lceil (\lg (B+1)) / 8 \right\rceil$ and let
$d = 8b - \lceil \lg (B+1) \rceil$; the sum of the byte size of the first $j$
partial sums is
\begin{equation} \label{eq:formula_byte}
\overbrace{j b}^\text{at least $b$ bytes for each node} +
\overbrace{(j \mathbin\gg d)}^\text{nodes with at least one additional byte} +
\overbrace{\left(j \mathbin\gg (d + 8)\right) \cdot (w/8 - b + 1).}^\text{uncompressed nodes multiplied by their additional space}
\end{equation}
In this way, one provides a byte-sized optimal compression for the lower nodes
of the tree, and leave uncompressed everything above. This simplification has
practically no impact on the overall space used by the Fenwick tree, as just a
very small fraction of nodes will be represented in uncompressed form.
For example, even in trees with height $64$ (the worst-case scenario if you
want to store each node in a single word of a $64$-bit processor) only about the
$0.026\%$ of the nodes reside in the third (uncompressed) partial sum.

As we will see in our experimental analysis, this looser byte compression
mechanism happens to offer often faster access to the partial sums than the
bit compression scheme.

\subsection{The level-order layout}
\label{sec:level}

Compression is not the only way to improve cache efficiency. Another
important aspect is \emph{prefetching}: instead of fetching the data when it is
needed, a prefetcher works by guessing what the next requested data will be and
fetching it in advance. If the prefetching succeeds---that is, if the fetched
data will be actually used in the near future---a cache miss is prevented and
the execution will be faster. If the prefetcher makes the wrong guess,
however, a cache line will be replaced with useless information and the
prefetching might cause additional cache misses. Both hardware and software
prefetchers exists, and the key to take advantage of their capabilities is to
dispose the data in a way that the accesses made by the most frequent
operations follow a simple and well defined pattern.

To find a predecessor, the Fenwick tree needs in the worst case (and also in the
average case) to query the value of a logarithmic number of nodes.
At each step, one of the children of the current node will be selected to
continue the search, and the distance between two children of a node of height
$h$ is $2^{h-1}$: one of the children, and we cannot predict which, will be
reached in the next iteration. For $h$ enough large so that the two children lie
in different cache lines the prefetcher can either make a guess and succeed with
a probability of $1/2$, or it can prefetch both of them, wasting a cache line
and doubling the required number of memory accesses.

To help cache prefetchers in making the correct guess every time we might
modify the layout of the partial sums, and proceed in \emph{level order}:
first the root, then the children of the root, and so on. At
that point, each pair of children would be at distance one, and thus
much likely on the same cache line.\footnote{Bille~\textit{et al.}~\cite{BCPSPSFT} have proposed the same layout
with completely different motivations.}

Of course this change of disposition does not come for free.
If we lay out the levels in a single array, with pointers locating the start of
each level, it will become very difficult to increase the number of nodes of the
tree: this solution is thus restricted to Fenwick trees with immutable size.
The alternative is to add a level of indirection, and thus to have an array of
pointers, each to a different level. As a small advantage, since each level is
compressed in the same way, once we locate the start of a level it is easier to
find a node than in the case of a Fenwick layout.

We will use always such an
implementation; as a consequence, every node in the level-order layout will be
identified by two integers: its \emph{level} $\ell$ and its \emph{zero-based
index} $k$ in its level. Nodes with $\ell=0$ are the leaves.
This representation and the classic Fenwick representation are connected by the
following easily proved proposition:
\begin{proposition}
\label{prop:corr}
A node with index $j$ in the classical Fenwick layout has level $\rho j$ and
index $j \mathbin\gg (1+\rho j)$ in the level-order layout. A node with level $\ell$ and
index $k$ in the level-order layout has index $(2k + 1)\ll \ell$ in the
classical Fenwick layout.
\end{proposition}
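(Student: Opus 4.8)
The plan is to prove the two directions of the correspondence separately, both by direct computation from the definitions of $\rho$ and the bit operations, using the structural description of the sideways heap given earlier.

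First I would establish the forward direction: given a node with index $j\in[1\..n]$ in the classical layout, I claim its level is $\rho j$ and its index in that level is $j\gg(1+\rho j)$. The claim about the level is essentially a restatement of a fact already noted in the excerpt, namely that ``node $j$ is exactly $\rho j$ levels above the leaf level'' and leaves (odd numbers) have $\rho j = 0$; so nodes of level $\ell$ are exactly the odd multiples of $2^\ell$, i.e.\ the $j$ with $\rho j = \ell$. For the zero-based index within the level, I would order the nodes of level $\ell$ by their Fenwick index: these are $2^\ell, 3\cdot 2^\ell, 5\cdot 2^\ell, \ldots$, that is $(2k+1)2^\ell$ for $k = 0, 1, 2, \ldots$. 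Hence the node $(2k+1)2^\ell$ has in-level index $k$, and inverting, a node $j$ with $\rho j = \ell$ can be written $j = (2k+1)2^\ell$ with $k = (j/2^\ell - 1)/2 = (j - 2^\ell)/2^{\ell+1}$. Since $j/2^\ell$ is odd, division by $2^{\ell+1}$ after subtracting $2^\ell$ is exactly the arithmetic right shift $j \gg (1 + \rho j)$ (the low $\ell+1$ bits of $j$ are $10\cdots0$, so shifting them out rounds down to $k$). This gives the first sentence.

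The backward direction is then just the algebraic inverse of the same computation: a node at level $\ell$ with in-level index $k$ is, by the ordering just described, the $k$-th odd multiple of $2^\ell$, namely $(2k+1)2^\ell = (2k+1)\ll\ell$, and one checks $\rho\bigl((2k+1)\ll\ell\bigr) = \ell$ since $2k+1$ is odd, so this is consistent with the forward map. I would present this as a short verification that the two maps compose to the identity in both orders.

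I do not expect a serious obstacle here; the proposition is flagged as ``easily proved'' and the only mild subtlety is making the identification ``$j \gg (1+\rho j)$ equals $(j-2^{\rho j})/2^{1+\rho j}$'' cleanly — i.e.\ arguing that the right shift genuinely implements this integer division because the bits being discarded are precisely the binary pattern $1\underbrace{0\cdots0}_{\rho j}$. That is a one-line remark about two's complement / binary representation rather than a real difficulty, so the main work is simply organizing the bijection between ``odd multiples of $2^\ell$ in increasing order'' and ``$\mathbb{N}$'' and reading off both formulas from it.
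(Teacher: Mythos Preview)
Your argument is correct. The paper does not actually supply a proof of this proposition: it introduces it as ``the following easily proved proposition'' and states it without a proof environment. Your write-up spells out exactly the computation the paper is eliding --- identifying level-$\ell$ nodes with odd multiples of $2^\ell$, ordering them to read off the zero-based index, and checking the two maps are mutual inverses --- so there is nothing to compare against and no gap to flag.
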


This bijection induces parent-children relationship on the level-order
layout: in particular, it is easy to show that:
\begin{itemize}
  \item for $\ell>0$, the children of node $\langle \ell, k\rangle$ in the
  sideways heap are $\langle \ell-1, 2k\rangle$ and $\langle \ell-1,
  2k+1\rangle$ (but note that the second child might not exist);
  \item for $0 < k \le n \mathbin\gg (\ell + 1)$, the parent of node $\langle \ell,
  k\rangle$ in the interrogation tree is $\langle \ell + 1 +\rho k , k \mathbin\gg (1 + \rho k )\rangle$;
  \item for $0\leq k < n \mathbin\gg (\ell + 1)$, the parent of node $\langle \ell,
  k\rangle$ in the update tree is $\bigl\langle \ell + 1 + \rho \bar k , k \mathbin\gg \bigl( 1+ \rho \bar k
  \bigr)\bigr\rangle$.
\end{itemize}

\section{Alignment problems}
\label{sec:alignment}

The Fenwick tree in Fenwick layout is extremely sensitive
to the alignment of the partial sums. In our experiments,
alignment problems can cause a severe underperformance of the primitives, which
can be up to three times slower in the case of the bit-compressed tree.

\subsection{Misalignment}

To understand why this can happen, consider a standard technique for reading
a sequence of bits which starts at position $p$ when $p$ is not byte-aligned (i.e., a multiple of eight):
one uses an unaligned word read at byte position $\lfloor p / 8 \rfloor$,
and then one suitably shifts and masks the result (note, however, that this method does not allow in general
to read bit blocks of length close to $w$, as they might span multiple words).

Consider now a bit-compressed tree of $n$ nodes, with $n$ larger than the size $M$ of a memory
page in bits, which we assume to be a power of two. As it is easy to
see from the nature of the parent-child relationship of the three implicit trees
described in Section~\ref{sec:tree}, in all operations the
partial sums whose indices have a large number of trailing zeros in binary representation
(i.e., $\rho j$ is large, where $j$ is the index) are the ones more frequently accessed.

Let as assume that the memory containing the partial sums is aligned with a memory page (e.g., if \verb|mmap()| is used to allocate memory), and
let us look at the bit position in memory of the partial sum of index $1\ll\lambda n$.
Using Proposition~\ref{prop:formula}
it is immediate to see that it will be
\[
(1\ll\lambda n)\cdot(S+1) - 1 - S - \rho (1\ll\lambda n).
\]
Thus, if $S+1+\rho(1\ll\lambda n) =S+1+\lambda n\leq w - 8$, reading the partial sum as above will cause a word read \emph{across} memory pages, which
will access to \emph{two} distinct memory pages and
usually generate \emph{two} cache
misses.
\begin{figure}
\centering
\includegraphics{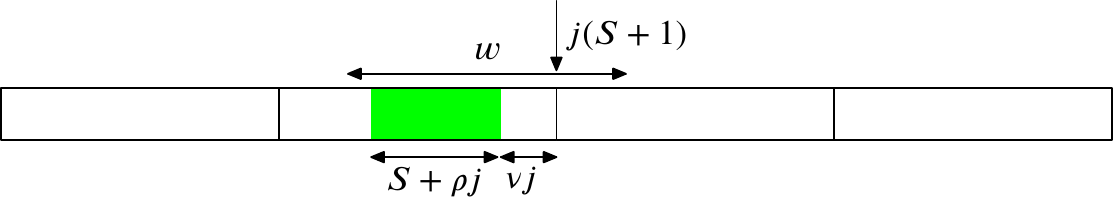}
\caption{\label{fig:mis}An example of misalignment: rectangles are word-aligned words, and $j(S+1)$ is on the border of a memory page. By Proposition~\ref{prop:formula} the
partial sum of node $j$ is stored in the highlighted block of $S+\rho j$ bits starting at $j(S+1)-\nu j- (S+\rho j)$, but using unaligned reads we would actually access the $w$ bits indicated
by the double arrow, thus accessing two distinct memory pages. Due to the 
layout of the Fenwick tree, these kind of reads are very frequent.}
\end{figure}

As shown in Figure~\ref{fig:mis}, the same argument applies to any node whose index $j$ is such that $2^{\rho j}$ is a multiple of $M$
(so the partial sum is stored at the very end of a memory page)
and  $1 + S + \rho j\leq w - 8$ (so we read an across-page word). As we mentioned, this high number of across-page
read/write operation can slow down the data structure by a factor of three, because
it affects the nodes that are accessed more frequently.\footnote{The reader might be tempted to suggest to read partial sums \emph{backwards}, that is,
performing an unaligned access using as reference the \emph{last} byte occupied by the partial sum, but this would simply make the problem appear when the
tree is in a different memory position, that is, a word after the start of a memory page.}

To work around this problem, we note the following property:

\begin{proposition}
\label{prop:aligned}
If $j\cdot (S+1)$ is a multiple of $w$, and the memory representation
of the tree is shifted by one bit to the right, the partial sum of index $j$ is contained in a word-aligned
word.
\end{proposition}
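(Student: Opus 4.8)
The plan is to compute, from the closed form of Proposition~\ref{prop:formula}, the exact interval of bit positions occupied by the partial sum of node $j$, translate that interval by one bit, and then use the hypothesis $w\mid j(S+1)$ to show that the translated interval lies inside a single word-aligned word.

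First I would locate the block. By Proposition~\ref{prop:formula} the first $j$ partial sums together occupy bits $\bigl[0,\ j(S+1)-\nu j\bigr)$ of the unshifted bit array, and since the partial sum of node $j$ is the last of these and has width $S+\rho j$, it occupies exactly the positions in $\bigl[\,j(S+1)-\nu j-(S+\rho j),\ j(S+1)-\nu j\,\bigr)$. Shifting the whole representation one bit to the right adds one to every bit position, so after the shift the partial sum of node $j$ occupies $\bigl[\,j(S+1)-\nu j-(S+\rho j)+1,\ j(S+1)-\nu j+1\,\bigr)$.

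Now I would bring in the hypothesis, writing $j(S+1)=mw$ for a positive integer $m$, so that the interval becomes $\bigl[\,mw-\nu j-S-\rho j+1,\ mw-\nu j+1\,\bigr)$, and I would check that it is contained in the word-aligned word $[(m-1)w,\,mw)$. Its right endpoint is at most $mw$ because $\nu j\ge 1$ (as $j\ge 1$); its left endpoint is at least $(m-1)w$ exactly when $\nu j+\rho j+S-1\le w$. To obtain this last inequality I would note that the one-bits of $j$ all lie in positions $\rho j$ through $\lambda j$, so $\nu j\le\lambda j-\rho j+1$, i.e.\ $\nu j+\rho j-1\le\lambda j\le\lambda n$; together with the standing size assumption of this section — in the regime discussed earlier, $S+1+\lambda n\le w-8$, so in particular $S+\lambda n<w$ — this gives $\nu j+\rho j+S-1=(\nu j+\rho j-1)+S\le\lambda n+S<w$, as required. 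Consequently the partial sum of node $j$ sits entirely inside the aligned word $[(m-1)w,\,mw)$, and a single aligned word read retrieves it without ever crossing a word — or, a fortiori, a memory-page — boundary.

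I expect the only real obstacle to be the lower-endpoint estimate: one must see that the one-bit shift, combined with $w\mid j(S+1)$, leaves just enough room below the boundary $mw$ for all $S+\rho j$ bits of the partial sum, and this depends on bounding $\nu j+\rho j$ — not merely $\rho j$ — by $\lambda n+1$. Once that small binary-arithmetic fact and the section's standing size assumption are in place, the rest is routine manipulation of the formula of Proposition~\ref{prop:formula} together with care about which direction ``shifting to the right'' moves bit positions.
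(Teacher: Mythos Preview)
Your approach is essentially the paper's: both compute the shifted interval $\bigl[j(S+1)-\nu j-(S+\rho j)+1,\ j(S+1)-\nu j+1\bigr)$ from Proposition~\ref{prop:formula} and then use the binary-arithmetic bound $\nu j+\rho j\le\lambda j+1$ to show it fits in the word of index $j(S+1)/w-1$.

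The one point to correct is your appeal to ``the standing size assumption of this section, $S+1+\lambda n\le w-8$''. That inequality is \emph{not} a standing assumption; it is the \emph{hypothesis} under which the across-page read problem arises in the preceding paragraph, and the proposition is meant to hold more generally. The paper closes the argument instead with the natural implicit constraint that every partial sum fits in a machine word, i.e.\ $S+\lambda j\le S+\lambda n\le w$ (the largest partial sum has $S+\lambda n$ bits). With that hypothesis your chain $\nu j+\rho j+S-1\le\lambda j+S\le w$ goes through unchanged, so the fix is purely a matter of citing the right reason for the final inequality.
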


\begin{proof}
Due to the one-bit shift, the partial sum of index $j$ will start at position $j\cdot (S+1) - \nu j - (S+\rho j) + 1$ (included)
and end at position $j\cdot (S+1) - \nu j +1$ (excluded). However,
\[
\nu j + \rho j \leq \lambda j + 1 \quad\text{for $1\leq j\leq n$,}
\]
because $\nu j$ is the number of ones in the binary representation of $j$, whereas $\rho j$ is bounded by the number of zeros, so
\[
\nu j  - 1 +  S + \rho j \leq S + \lambda j \leq w,
\]
because the largest size in bits of a partial sum cannot exceed $w$. Thus, the partial sum
of index $j$ is contained in the word of (zero-based) index $j\cdot (S+1)/w - 1$.
\end{proof}
As a consequence, we can read frequently accessed partial sums (which have a large $\rho j$, so $j\cdot (S+1)$ is a multiple of $w$)
 using word-aligned words (i.e., if the partial sum
starts at position $p$, we read the word of index $\lfloor p/w\rfloor$ and then suitably shift and mask).
The proposition will work also when $j\cdot(S+1)$ is a multiple of eight, albeit the word containing the partial sum
will be just byte-aligned.
When $j\cdot(S+1)$ is \emph{not} a multiple of eight, if we want to guarantee that a single word access
will retrieve correctly the partial sum we have to make some assumption: in
particular, the largest partial sums we might read, which will be made of $S+2$ bits (as $\rho j < 3$),
must be at most $w-7$ bits long, that is, $S\leq w - 9$.
In practice,
on $64$-bit architectures we have to require $S\leq 55$, that is, $B < 2^{55}-1$,
which is more than sufficient for our purposes. Alternatively, one has to resort to read one or two
word-aligned words and reconstruct the partial sum from those.

Note that this bound is tight: if $w=64$,
$S=56$ and $j=60$ the last bit of the associated $58$-bit partial sum will be in position $3480$, so
the partial sum will span two words.

We remark that the byte-compressed Fenwick tree in Fenwick layout has no
across-page access problems, as all partial sums are byte-aligned, and
if $\rho j$ is large enough all three summands of the
offset formula~(\ref{eq:formula_byte}) are multiples of $w$.

\subsection{Hyper-alignment}

The second alignment problem, which is more subtle and affects even
the classical version, is that frequently accessed partial sums are stored at memory locations
with the same residual modulo $2^aw$, for small $a$. This is obvious in the classical construction, as frequently
accessed partial sums have indices with a large $\rho j$. In
the bit-compressed tree, instead, this pattern is due to access being modeled using Proposition~\ref{prop:aligned}.
This kind of \emph{hyper-aligned} access implies that all the nodes end up being cached in the same way of the cache,
as multi-way caches usually decide in which way to store a cache line using the lowest bits of the address.

The solution we adopt is that of perturbing slightly the data structure so to spread the residuals of the memory addresses
of frequently accessed partial sums across the whole possible range of values. To obtain this result,
we insert $w$-bit sized \emph{holes} at regular intervals. By making the intervals a sufficiently large power of two,
the impact on space usage is negligible\footnote{In our code, we use an interval of $2^{14}$, so the space usage
increases by less than one per thousand, and the partial sum of index $j$ is stored in position
$j+(j\gg 14)$. In the case of bit compression, we offset the starting bit by $64\cdot(j \gg14)$.}, and computing the actual position requires just an additional shift
and a sum. By making the holes $w$-bit wide, all the good alignment properties of the structure (i.e., Proposition~\ref{prop:aligned})
are preserved. The find operations are the one more affected: a standard Fenwick tree on a large list (i.e., 100M elements)
features 30\% shorter execution times on our hardware once holes are put in place.

\section{Ranking and selection}
\label{sec:rankselect}

Let us now get back to problem of dynamic rank and selection. Consider a
\emph{dynamic} bit vector $\bm b$: that is, we are able to change the value of
$b_i$ (\emph{set} and \emph{clear} operations) and expand or reduce the length
of $\bm b$ ($\push$ and $\pop$ operations).

If the vector is very short, it is possible to perform ranking and selection
very quickly using \emph{broadword programming}. Modern computers are capable of
computing sideways additions on a word using a single instruction,
usually called \emph{popcount} (population count). With this instruction we can
easily perform ranking within a word after suitably masking. Also selection in a word
can be performed quickly,\cite{VigBIRSQ,GoPOSDSMD} particular on very
recent Intel architectures which provide the PDEP instruction. In practice, since
sequential memory accesses are highly amenable to cache prefetching, a linear
search is still the fastest way to perform dynamic ranking
and selection on a sufficiently small bit vector.

For larger vectors, we can split the problem into ranking and selection over
\emph{blocks} of $q$ bits, where a linear scan is
sufficiently fast, and keeping track of the prefix sums of the number of ones in each block
using a Fenwick tree $\bm f$. At that point, we can implement rank and select as
\begin{align*}
\rank_{\bm b}(p) &= \prefix_{\bm f}(m) + \rank_{\bm b[mq\..
(m+1)q)}(p \bmod q)& &\text{where $m=\lfloor p/q\rfloor$}\\
\select_{\bm b}(k) &= pq + \select_{\bm b[m q \..
(m+ 1) q)} (g)& &\text{where $\langle p,g\rangle = \find_{\bm f}(k)$ and
$m=\lfloor p/q\rfloor$}\\
\end{align*}

Note that this strategy allows us to compute rank and select queries on zeros
too. Select on zeroes consists in replacing $\find$ with $\findc$ and flipping
all the bits during linear scans: the bound $B$ of Section~\ref{sec:compfind} is
$q$. Rank on zeros is simply $p - \rank_{\bm b}(p)$.

We have previously assumed we are capable of computing ranking and
selection within a word using some extremely fast operations, such as
dedicated assembly instruction. This assumption suggests the
choice of a fairly large $q$, as the bigger is $q$, the smaller will be
additional space required to store the Fenwick tree. However, many RISC
architectures such as the widespread ARM are not capable of performing these
operations in few clock cycles. In this case, the compact structures described in this
paper might be even more relevant, as a small $q$ implies a larger tree.

\section{Experiments}

In this section we present the results of our experiments, which were performed
on an Intel\textregistered{} Core\texttrademark{} i7-7770 CPU @3.60GHz (Kaby
Lake), with 64\,GiB of RAM, Linux 4.17.19 and the GNU C compiler 8.1.1.
We use \emph{transparent huge pages}, a relatively new feature of the operating system
that makes available virtual memory pages of $2$\,MiB, which significantly reduce
the cost of accessing the TLB (Translation Lookaside Buffer) for large-scale data structures.


We will present two sets of experiments:
\begin{itemize}
  \item In the first set, we examine the primitives of the Fenwick tree in
  isolation, exploring the effects of layout and compression policies on speed. We generate
  list of values chosen uniformly at random, and average the running time of a large
  number of queries at random locations. Note that both $\prefix$ and
  $\add$ are \emph{data-agnostic}, in the sense that the execution flow
  does not depend on the content of the tree; the execution flow of $\find$, on the contrary, depends
  on the content.
  \item In the second set, we use a Fenwick tree to support rank and
  selection: in this case, beside the way we structure the tree
  we also explore the effect of different block sizes; as a baseline, we report results for Nicola Prezza's library
	for dynamic bit vectors~\cite{PreFDDSSP}, which is the only practical available
	implementation we are aware of.\footnote{We remark that Prezza's implementation provides additional primitives for \emph{bit insertion} and \emph{bit deletion}, which
is not available through the Fenwick tree.} We generate random bit vectors with
  approximately the same number of ones and zeroes, and, again, average the running time
  of a large number of queries at random locations.
\end{itemize}
In both cases, we take care of potential \emph{dead-code elimination} by assigning the
value returned by each test to a variable which is in the end assigned a volatile dummy variable.
Moreover, to avoid excessive and unrealistic speculative execution, in a sequence of calls
to a primitive we xor the next argument using the lowest bit of the previously returned value,
as this approach creates a chain of dependencies among successive calls.

We do not report the results for
complemented find and since both its algorithm and its memory-access pattern are
very similar to that of a standard find, so their performances are very similar. The same argument
holds for ranking and selection on zeros.

Finally, we report results in two real-world applications: counting the number 
of transpositions generating a permutation, and generating pseudorandom graphs
following a \emph{preferential attachment} model.

\subsection{Variants}

\begin{table}
\centering
\begin{tabular}{lr}
Name & Comment\\
\hline
bit$[\ell]$& Level-order layout; bit compression\\
byte$[\ell]$& Level-order layout; byte compression\\
fixed$[\ell]$& Level-order layout; fixed-width representation\\
bit$[F]$&Classical Fenwick layout; bit compression\\
byte$[F]$&Classical Fenwick layout; byte compression\\
fixed$[F]$&Classical Fenwick layout; fixed-width representation\\
\end{tabular}
\caption{\label{tab:names}Naming scheme for the Fenwick trees appearing in
experiments. Implementations of dynamic bit vectors
have an additional number denoting the size of a block in word (i.e.,
bits$[\ell]8$).}
\end{table}

We consider a number of implementations displayed in Table~\ref{tab:names}. We
use three different kinds of compression strategies, namely: no compression
(dedicating $64$~bits for each node), byte compression (see Section~\ref{sec:bytecomp})
and bit compression
(see Section~\ref{sec:bitcomp}). For each of them we consider the classical and level-order layout.\footnote{Even
though in Section~\ref{sec:level} we provided parent formulas for the update
and interrogation trees in terms of levels, we found that is faster to use the standard
formulas for the Fenwick layout and convert the result using Proposition~\ref{prop:corr} at each iteration.}
Since we want to use the results of this benchmark as a reference for choosing
good parameters for the rank and selection structures, we are fixing to $64$ the
bound $B$ on the leaves.

We remark that throughout our graphs a few spikes are visible: they are due to
the nonlinear behavior of the tree, as when the size reaches a power of two a
new root is created, usually in a new memory page if the tree is page-aligned,
and the tree becomes as unbalanced as possible (e.g., updating an element will
require one additional iteration). This fact can increase the number of
TLB/cache misses and the number of average accesses: as the number nodes grow,
this cost is amortized.

\subsection{Fenwick trees}

Figures~\ref{fig:bench_prefix}-\ref{fig:bench_add} report the performance of several variants
for the $\prefix$, $\find$ and $\add$ operations. Note that we do not report data about $\push$/$\pop$, but a $\pop$ primitive
would be constant-time (just update the size of the tree, as indices of parents in the update tree are always larger than indices of their children),
while the performance of the $\push$ primitive is essentially identical to that of $\add$ (we will, however, give a sample application using $\push$).
\begin{itemize}
  \item In the performance of $\prefix$, it is immediate evident the point in which each structure
  goes out of the L3 cache. Non-compressed structures have a slight advantage initially, but they are
  almost twice as slow when the compressed structures still fits in cache. After all structure go out
  of the cache, the advantage of the simpler code of non-compressed structure shows again, albeit the
  speedup is very marginal compared to the byte-compressed structures. The bit-compressed structure
  have significantly more complex code, and in particular a test that decides whether to align
  access to bytes or words, and this cost increases logarithmically.
  \item The graphs of $\find$ are somehow smoother, due to the more regular cache usage. What is
  evident here is the bad performance of the classical Fenwick tree, and that the fastest
  structure (except for very small scale) is the level-order byte-compressed version. Note that
  a $\find$ operation requires always $\lg n$ accesses, whereas in the other cases, as noted by
  Fenwick~\cite{FenNDSCFT}, assuming a perfect sideways heap only $(1 + \lg n) /
  2$ are required on average.
  \item The graphs for $\add$ show a behavior similar to that of $\prefix$, but with less pronounced
  differences in performance. In particular, the Fenwick and the level-order performances are much closer.
\end{itemize}

Note that the advantage of compressed trees in underestimated by our benchmarks, as the cache will be shared with
other parts of the code. Benchmarking different kind of trees inside a specific application is the most reliable way
to determine which variant is the most efficient for the application at hand.

\begin{figure}
\centering
\begin{minipage}{.5\textwidth}
  \centering
  \includegraphics[width=\linewidth]{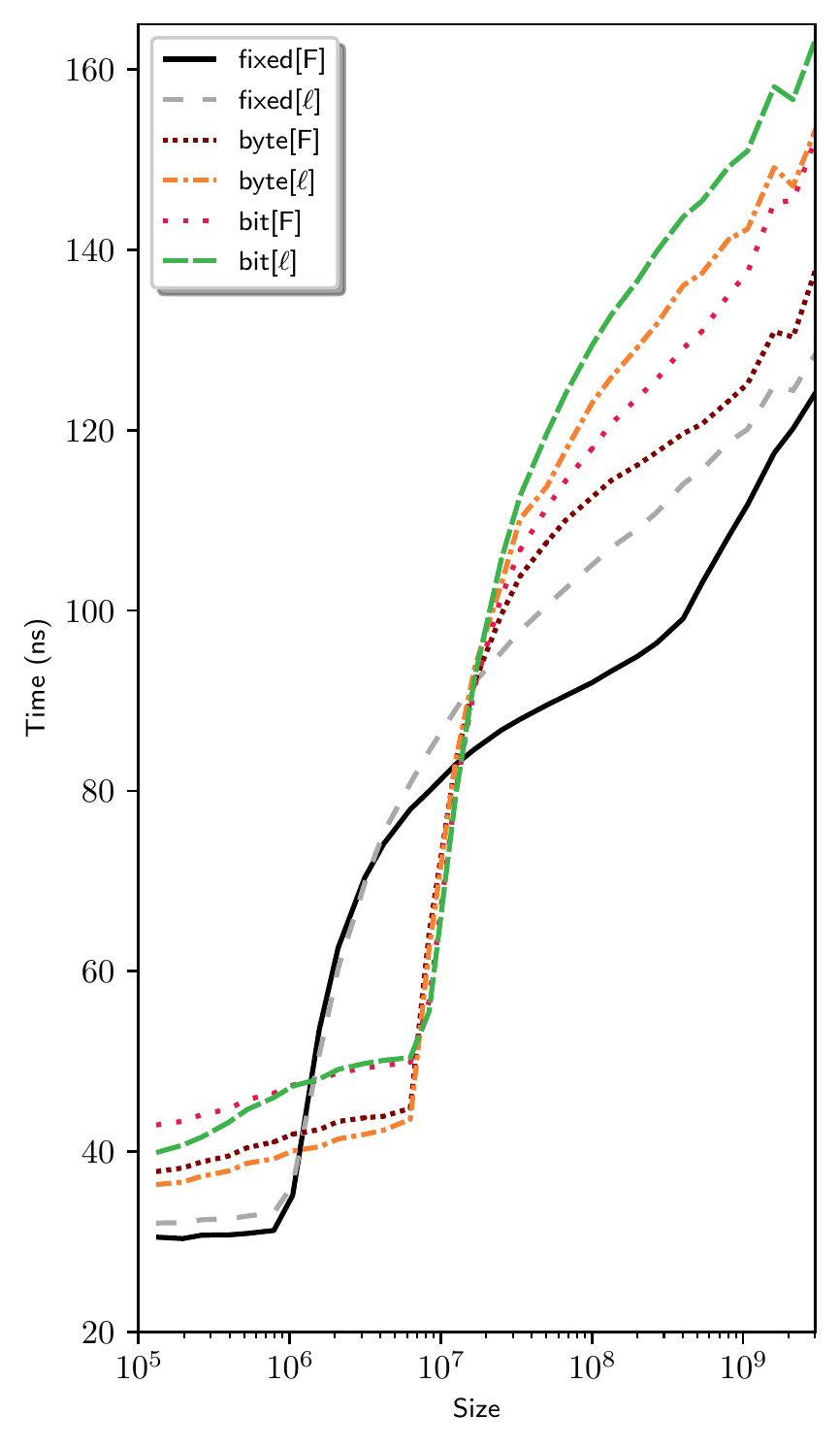}
  \captionof{figure}{\label{fig:bench_prefix}Performance of $\prefix$.}
\end{minipage}%
\begin{minipage}{.5\textwidth}
  \centering
  \includegraphics[width=\linewidth]{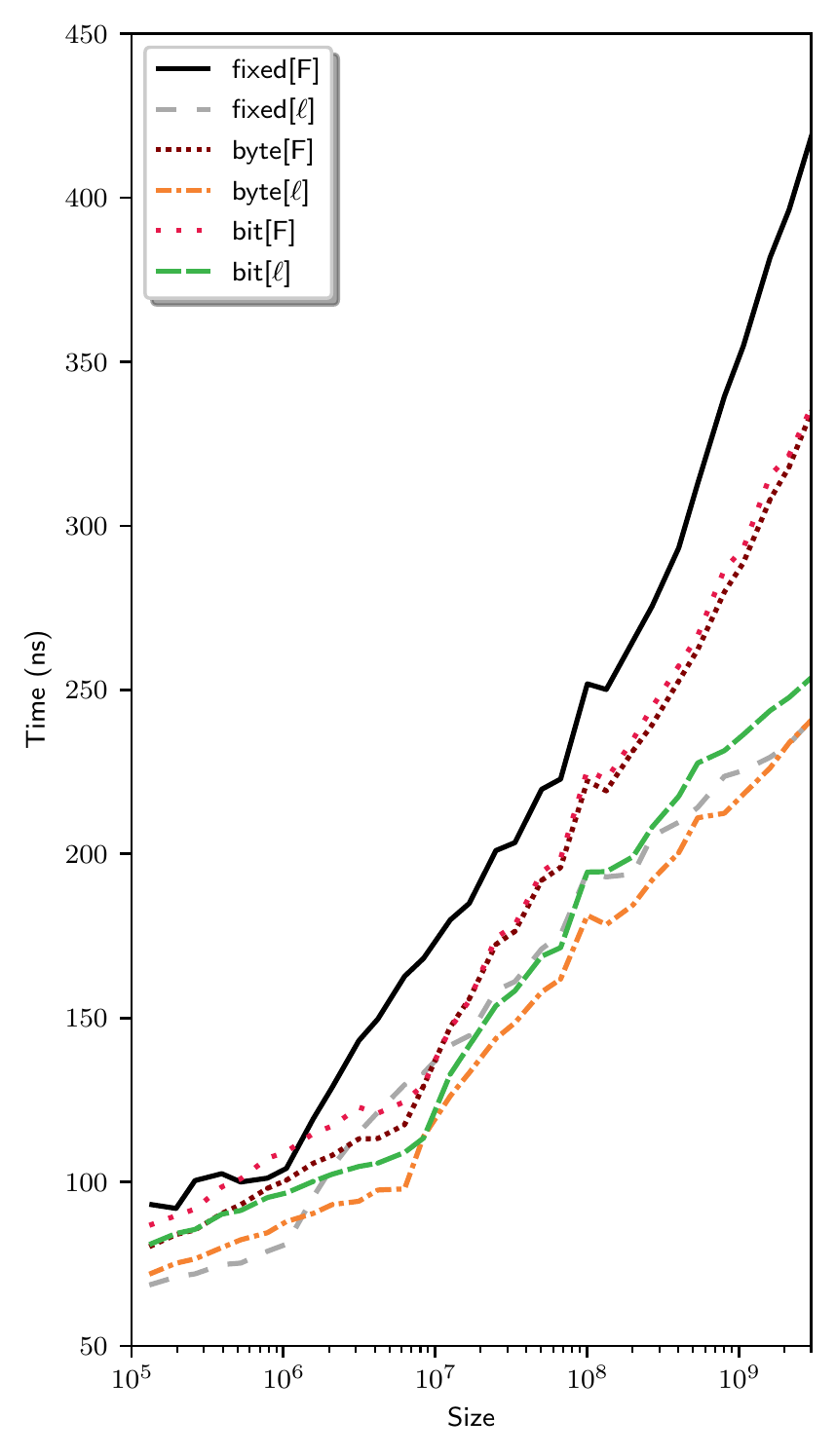}
  \captionof{figure}{\label{fig:bench_find}Performance of $\find$.}
\end{minipage}
\end{figure}

\begin{figure}
\centering
\begin{minipage}{.5\textwidth}
  \centering
  \includegraphics[width=\linewidth]{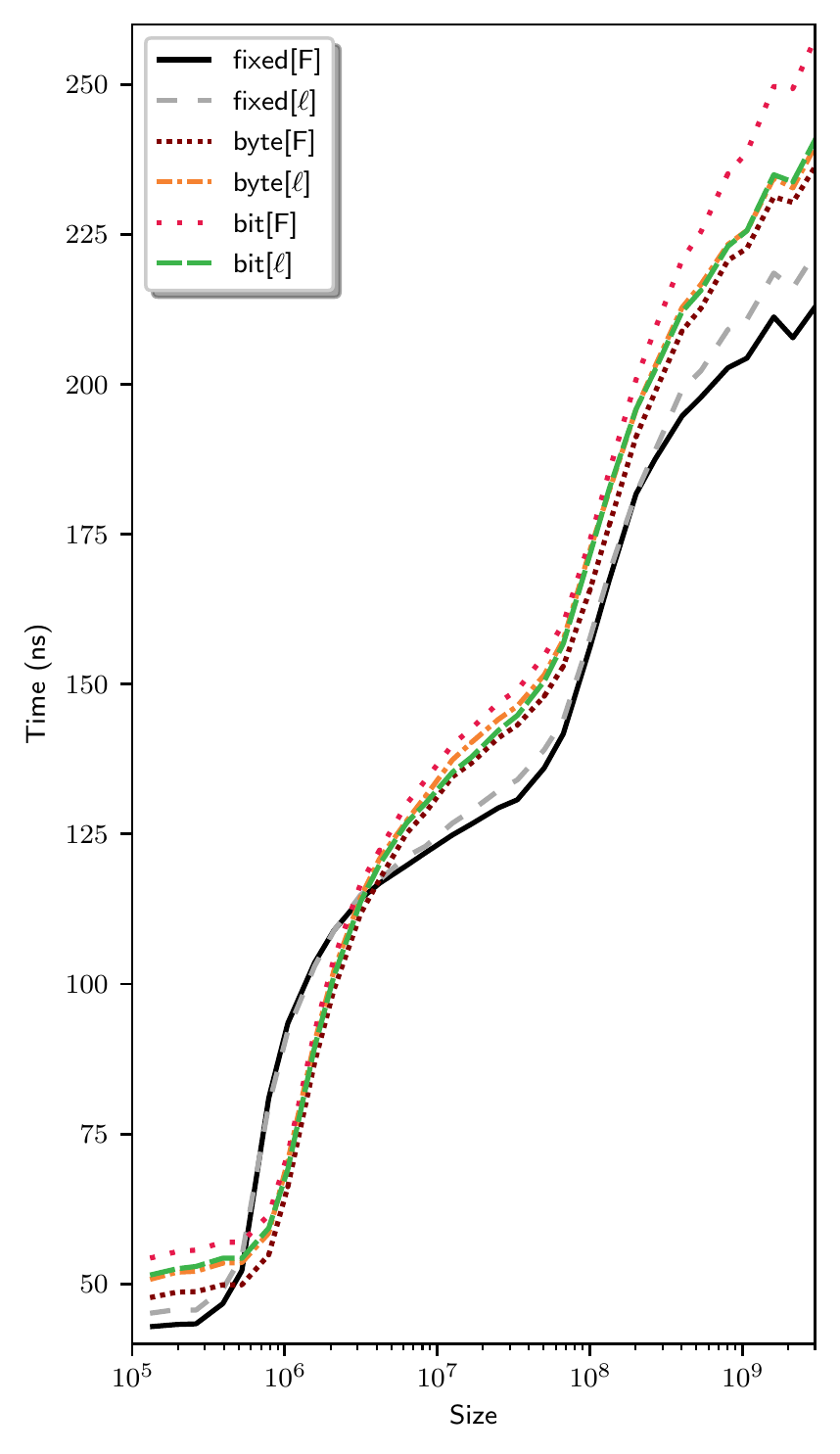}
  \captionof{figure}{\label{fig:bench_add}Performance of $\add$.}
\end{minipage}%
\begin{minipage}{.5\textwidth}
  \centering
  \includegraphics[width=\linewidth]{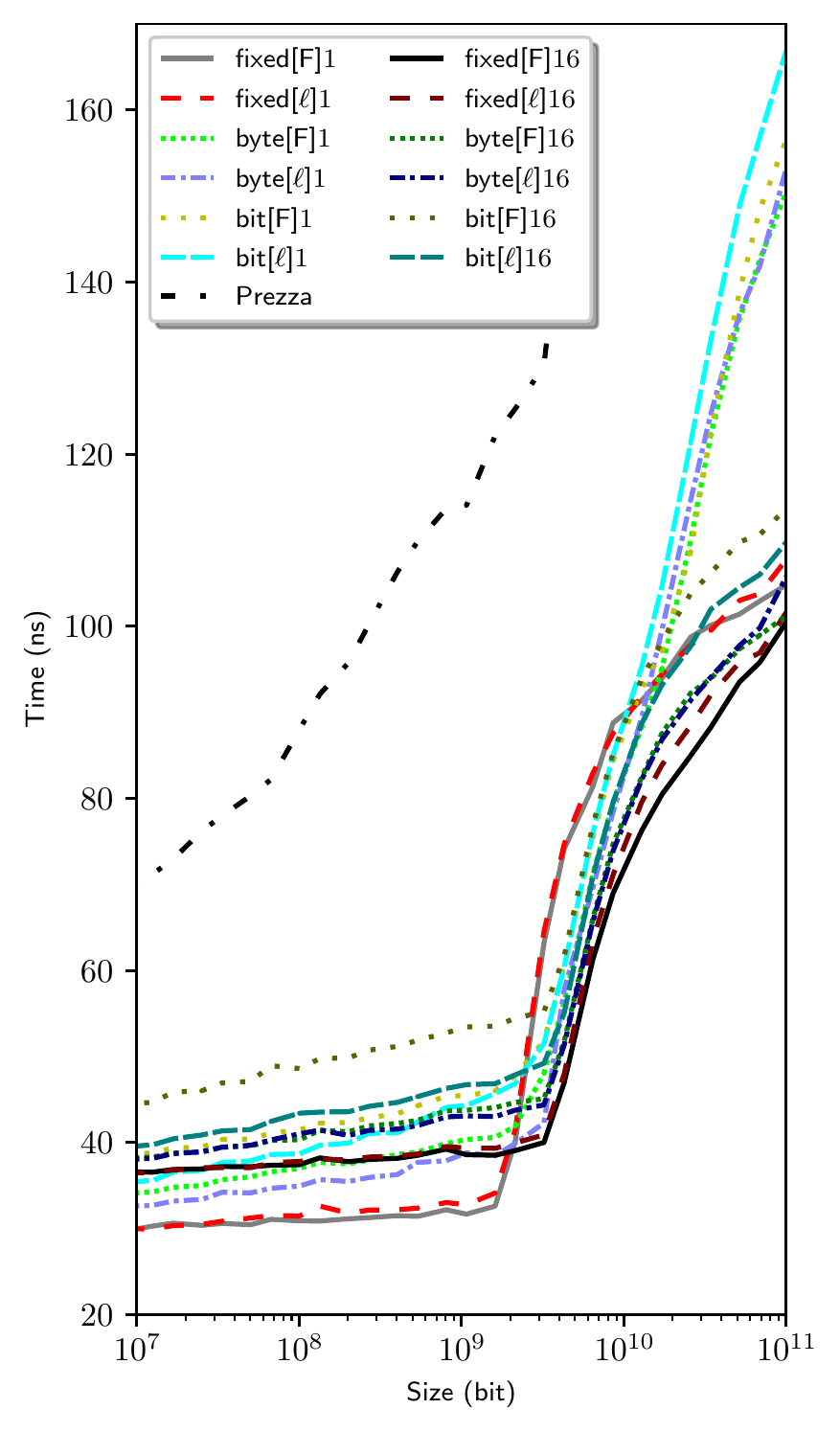}
  \captionof{figure}{\label{fig:bench_rank}Performance of $\rank$.}
\end{minipage}
\end{figure}

\begin{figure}
\centering
\begin{minipage}{.5\textwidth}
  \centering
  \includegraphics[width=\linewidth]{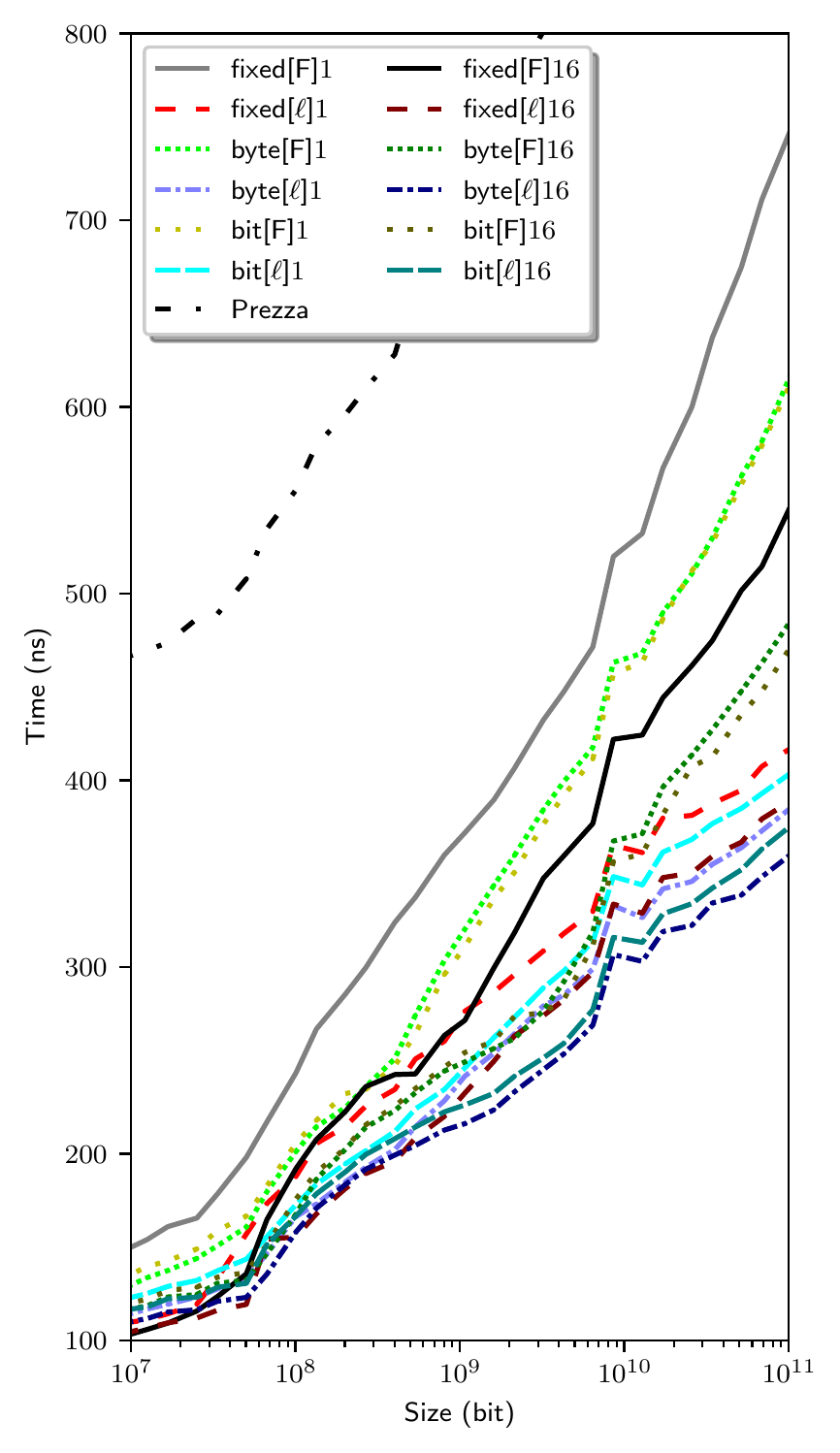}
  \captionof{figure}{\label{fig:bench_select}Performance of $\select$.}
\end{minipage}%
\begin{minipage}{.5\textwidth}
  \centering
  \includegraphics[width=\linewidth]{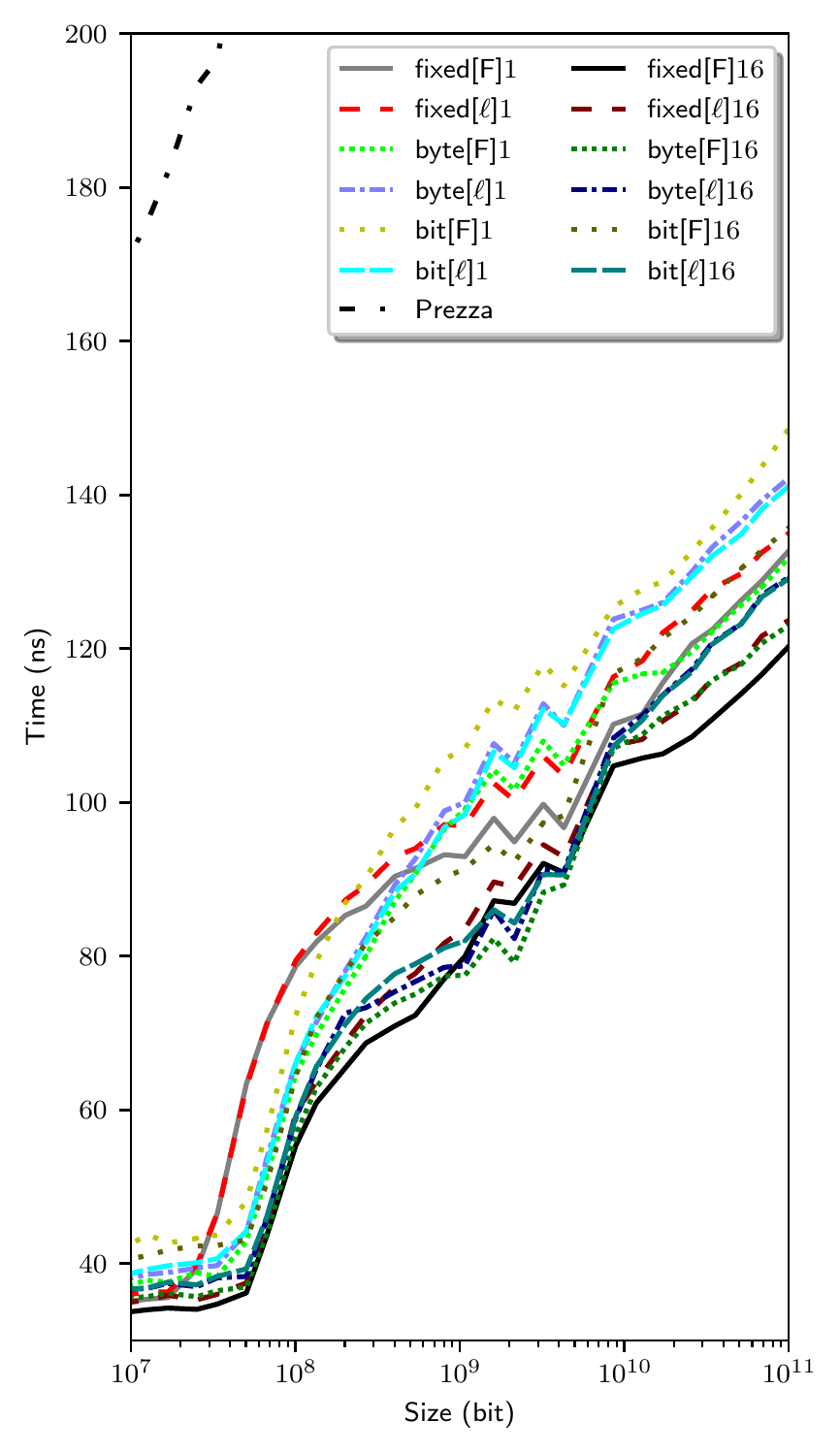}
  \captionof{figure}{\label{fig:bench_update}Performance of bit updates.}
\end{minipage}
\end{figure}

\subsection{Dynamic bit vectors}

The graphs for operations on dynamic bit vectors, show in
Figures~\ref{fig:bench_rank}-\ref{fig:bench_update}, are somewhat smoother than those for
the Fenwick trees, because of the much smaller size (a small fraction of the
size of the bit vector). We report experiment with trivial, one-word blocks, and with blocks make of $16$ words.
Table~\ref{tab:st} reports a space-time tradeoff plot for the same variants at three different sizes.\footnote{Note that the same plot is valid
for the underlying Fenwick tree, modulo some constant offset.}
We did not implement a $\push$/$\pop$ primitive, as its performance would be
essentially constant-time, in case the underlying Fenwick tree does not change,
or identical to $\push$/$\pop$ operations on the tree.
For these experiments we are comparing the performance of our implementations
with Prezza's library for dynamic bit vectors.

\begin{itemize}
  \item When ranking, one-word blocks have a very small advantage at small size (about $5$\,ns), due to the
  simpler code, but the performance penalty on large bit vector is very significant. The performance
  of the underlying Fenwick tree follows closely the results for $\prefix$.
  \item When selecting, lever-order trees with byte and bit compression are the fastest.
  In this case, blocks reduce significantly the number of level of the tree, bringing bit
  compression close to the performance of byte compression.
  \item Updates are the only operation in which fixed-width Fenwick trees win, albeit
  by a very small margin, on the corresponding byte-compressed variant.
\end{itemize}

Table~\ref{tab:space} reports the space usage per bit. With one-word blocks, a fixed-size Fenwick
tree uses the same space of the bit vector, increasing space usage twofold. Bit or byte compression bring
down the space increase to $12$\% and $16$\%, respectively. With 16-word blocks, any Fenwick tree
requires a space that is only a few percent of the bit vector. Prezza's implementation uses $13$\% additional space.

\begin{table}
\centering
\begin{tabular}{rrrrrrrr}
 & \multicolumn{1}{c}{fixed/1} & \multicolumn{1}{c}{byte/1} & \multicolumn{1}{c}{bit/1} & \multicolumn{1}{c}{fixed/16} & \multicolumn{1}{c}{byte/16} & \multicolumn{1}{c}{bit/16} & \multicolumn{1}{c}{Prezza}\\
\hline
\multicolumn{1}{l}{bits} & $2.00$ & $1.16$ & $1.12$ & $1.06$ & $1.02$ & $1.01$ & $1.13$ \\
\hline
\end{tabular}
\caption{\label{tab:space}Average space used per bit on a $10^9$ bit vector. In our case, space depends
 on the block size and on the compression type. Prezza's data structure average space is
  claimed to be around $1.2n$ bits~\cite{PreFDDSSP}, but we are reporting a better
  result obtained by inserting every element at the end (as it would happen in our case).}
\end{table}

\begin{table}
  \centering
  \begin{tabular}{ c | c | c | c | }
    \hline
    \parbox[t]{2mm}{\rotatebox[origin=c]{90}{rank (ns)}}
    & \begin{minipage}{.295\textwidth} \includegraphics[width=\linewidth]{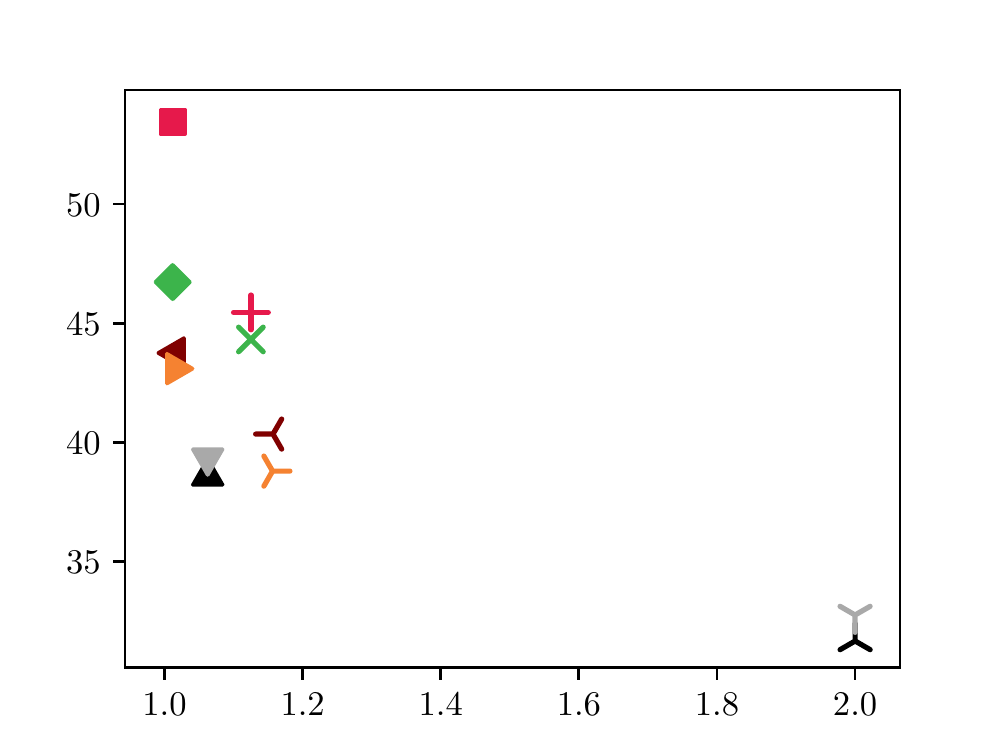} \end{minipage}
    & \begin{minipage}{.295\textwidth} \includegraphics[width=\linewidth]{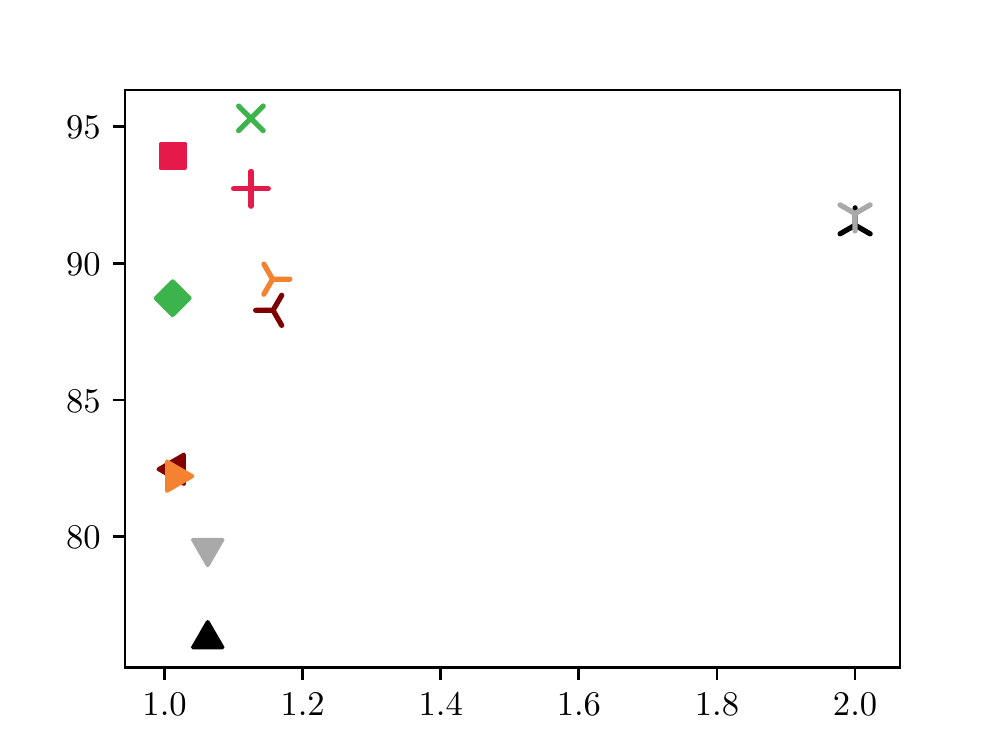} \end{minipage}
    & \begin{minipage}{.295\textwidth} \includegraphics[width=\linewidth]{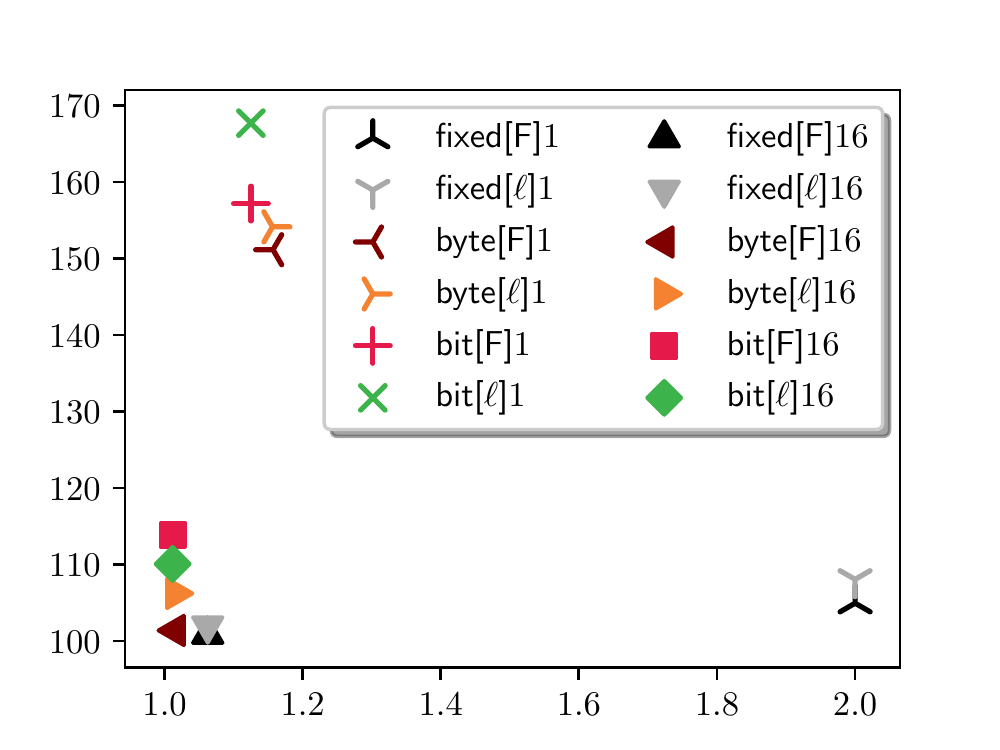} \end{minipage} \\  \hline

    \parbox[t]{2mm}{\rotatebox[origin=c]{90}{select (ns)}}
    & \begin{minipage}{.295\textwidth} \includegraphics[width=\linewidth]{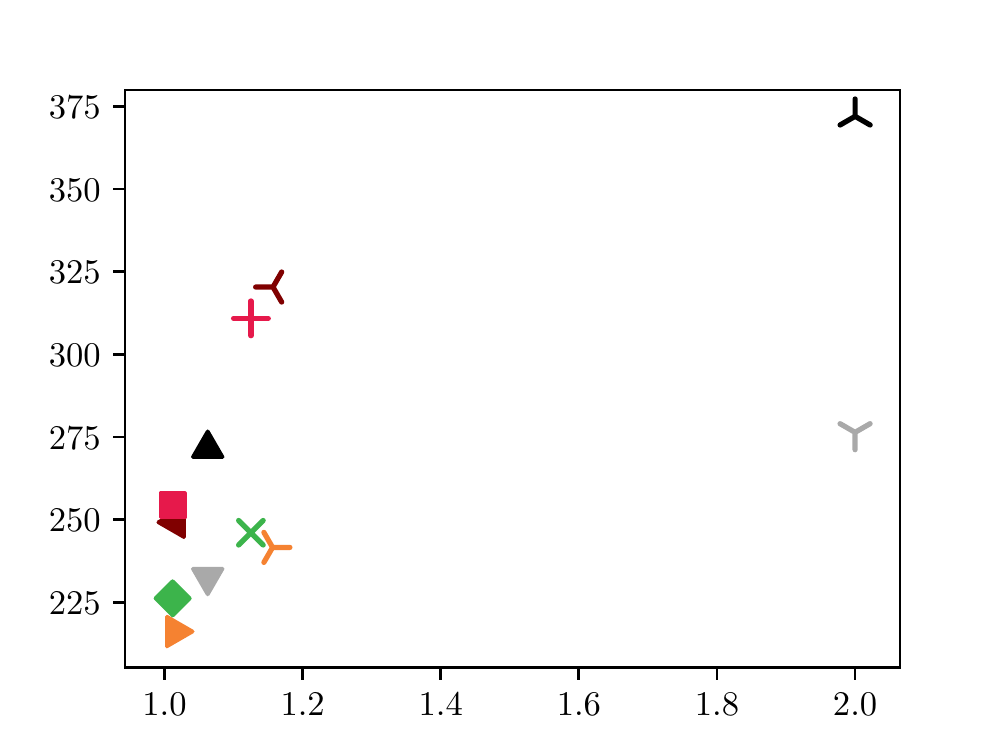} \end{minipage}
    & \begin{minipage}{.295\textwidth} \includegraphics[width=\linewidth]{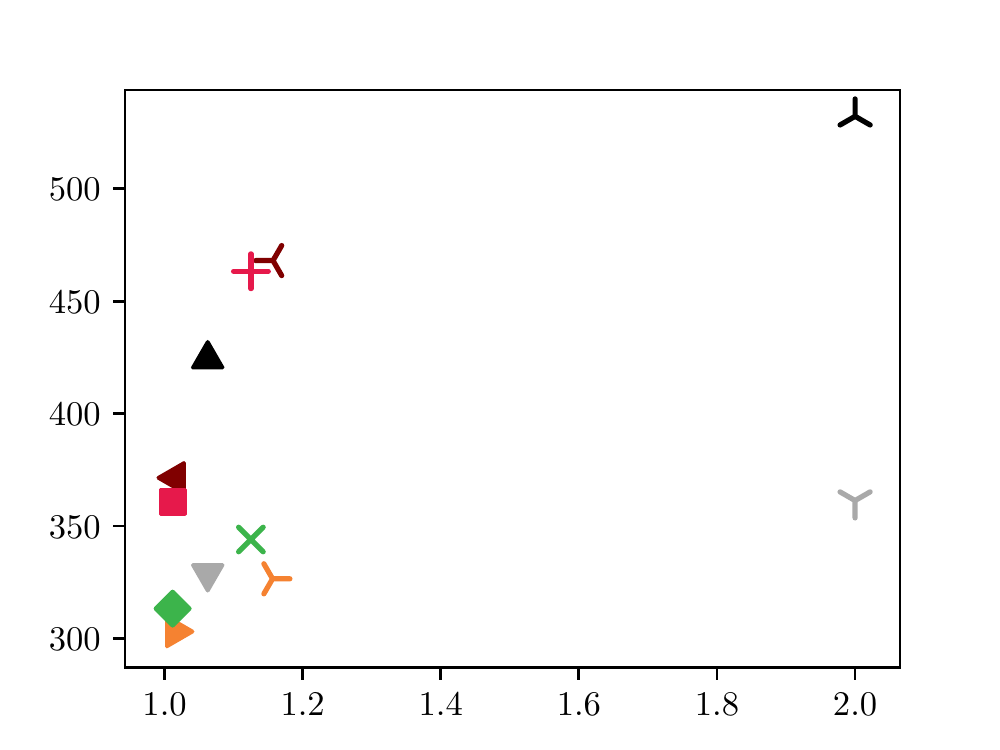} \end{minipage}
    & \begin{minipage}{.295\textwidth} \includegraphics[width=\linewidth]{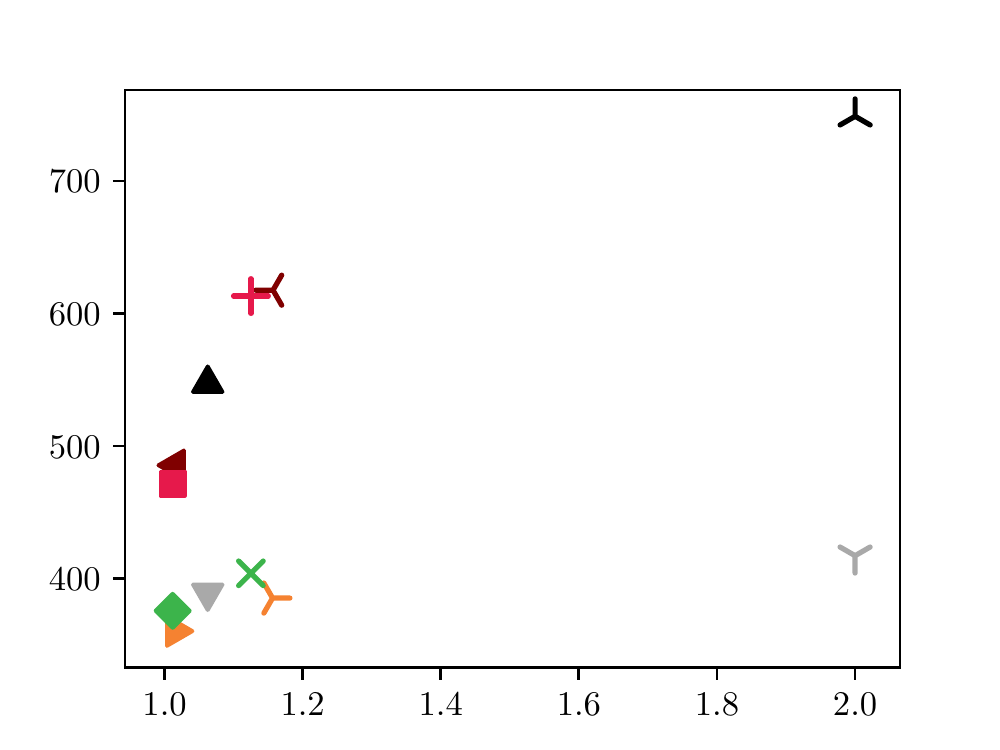} \end{minipage} \\ \hline

    \parbox[t]{2mm}{\rotatebox[origin=c]{90}{update (ns)}}
    & \begin{minipage}{.295\textwidth} \includegraphics[width=\linewidth]{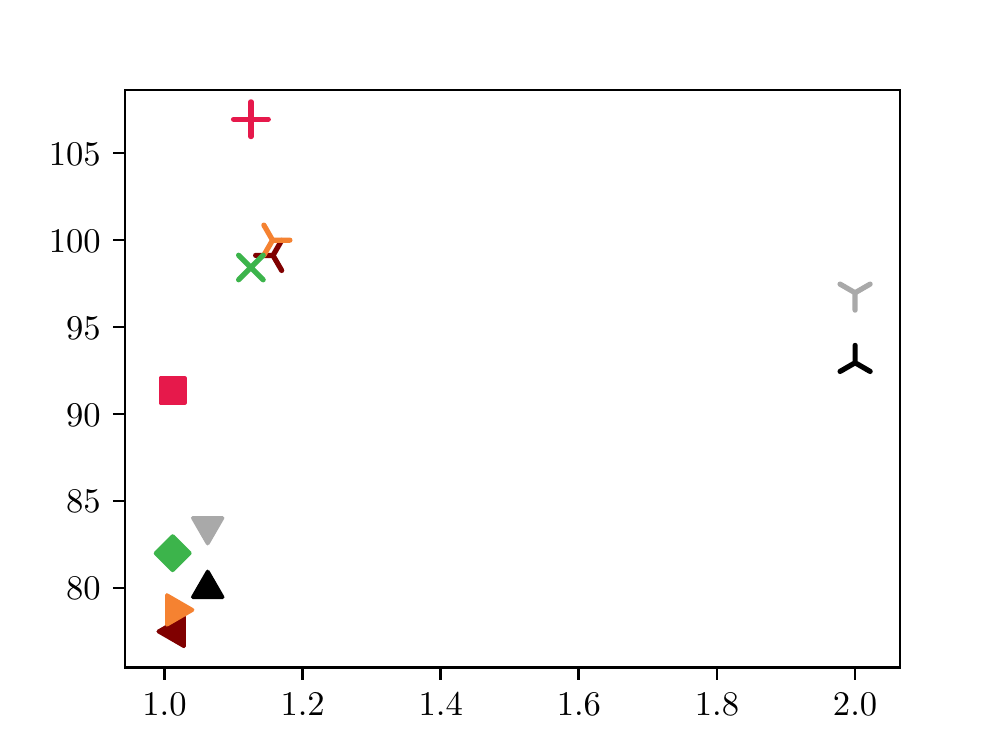} \end{minipage}
    & \begin{minipage}{.295\textwidth} \includegraphics[width=\linewidth]{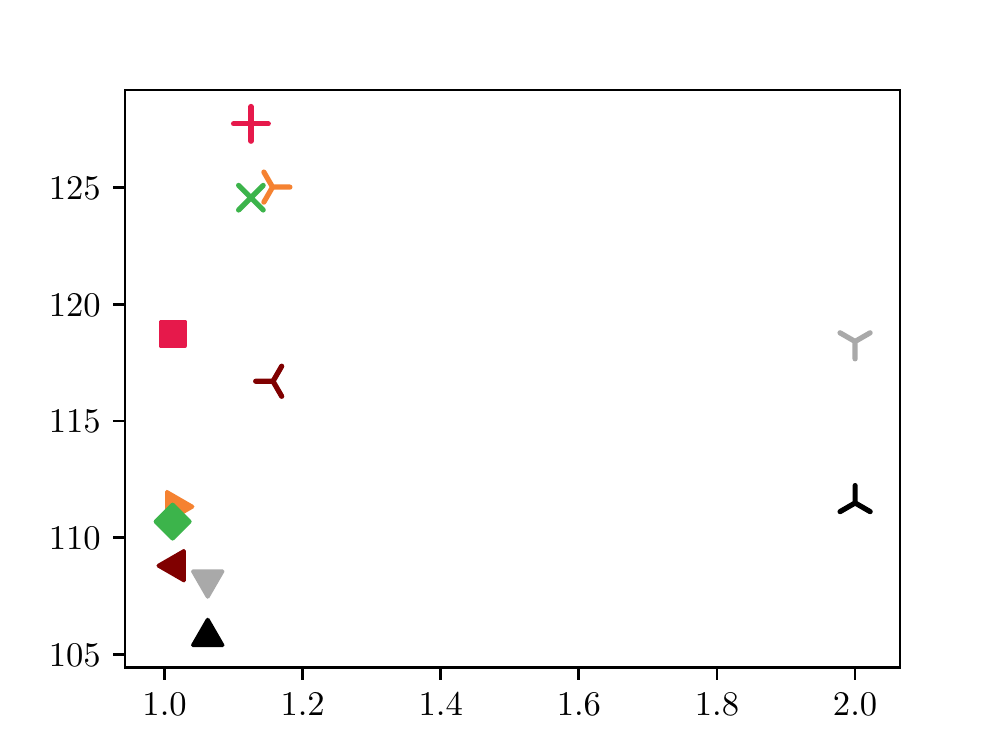} \end{minipage}
    & \begin{minipage}{.295\textwidth} \includegraphics[width=\linewidth]{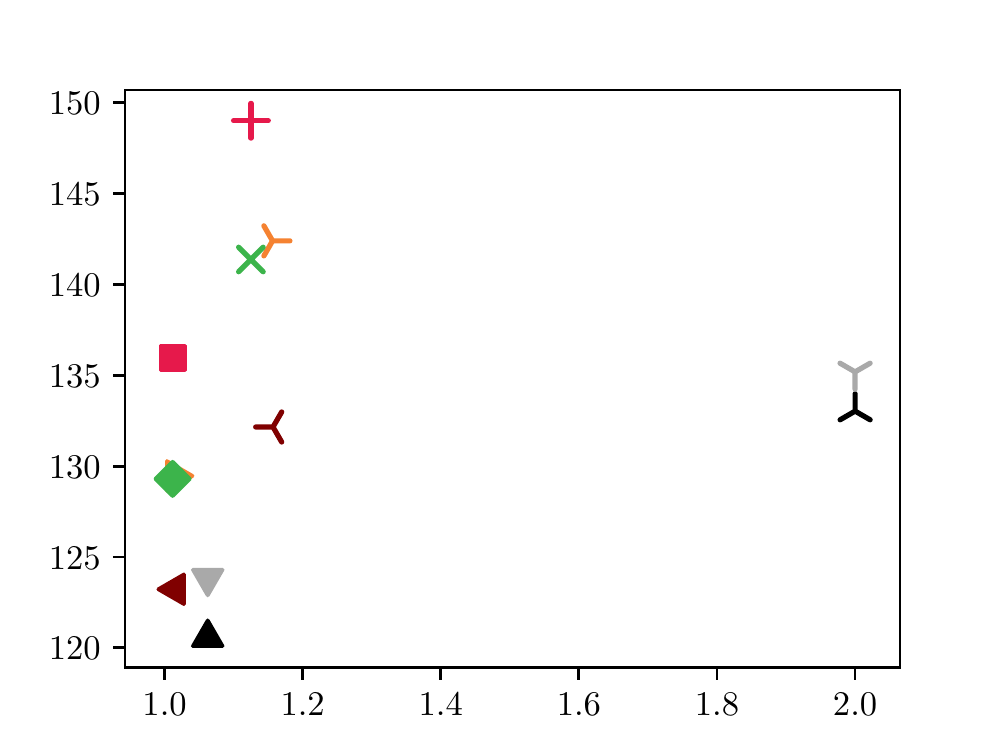} \end{minipage} \\ \hline

    & $10^9$ bits & $10^{10}$ bits & $10^{11}$ bits
  \end{tabular}

  \caption{\label{tab:st}Space-time tradeoff plots for the dynamic bit vector implementations of Figure~\ref{fig:bench_rank}-\ref{fig:bench_update} at different sizes.}
\end{table}

\subsection{Counting transpositions}

In this application, we use our data structures in a classical task: given a permutation
$\pi$, count the number of transposition (e.g., exchanges of two elements) that are necessary to
generate $\pi$. 
This computation is a fundamental step in computing \emph{Kemeny's distance}~\cite{KemMWN} (the
number of transpositions that are necessary to transform a permutation in another), Kendall's $\tau$~\cite{KenNMRC}
(a statistical correlation index), etc. One starts from a permutation $\pi$ on $n$ elements, computes the inverse $\pi^{-1}$, 
initializes a bit vector containing $n$ ones, and then scans $\pi^{-1}$ (seen as a list of numbers): for each element $x$, one computes
the rank at $x$, and then clears the bit of index $x$. In this way, every rank represents the number of elements that $x$
would be exchanged with in a bubble sort of $\pi$; the sum of these ranks is exactly the number of transposition
generating $\pi$.

In Figure~\ref{fig:transp}, we report the time per element to
count transpositions using a standard Fenwick tree, as it happens in SciPy's code,~\cite{JOPSP} 
using Prezza's library and using a fixed-size and a byte-compressed implementation of our
dynamic ranking structure. To maximize the difference between each structure, we generated
$\pi^{-1}$ implicitly using a suitable linear congruential pseudorandom number generator with power-of-two modulus\footnote{We applied 
a bijection to the output of the generator to prevent the short period of the low bits to influence the results.} Besides
the evident speed increase, our structures (and Prezza's) are more than fifty times smaller than SciPy's.
\begin{figure}
  \centering
  \includegraphics[width=0.5\linewidth]{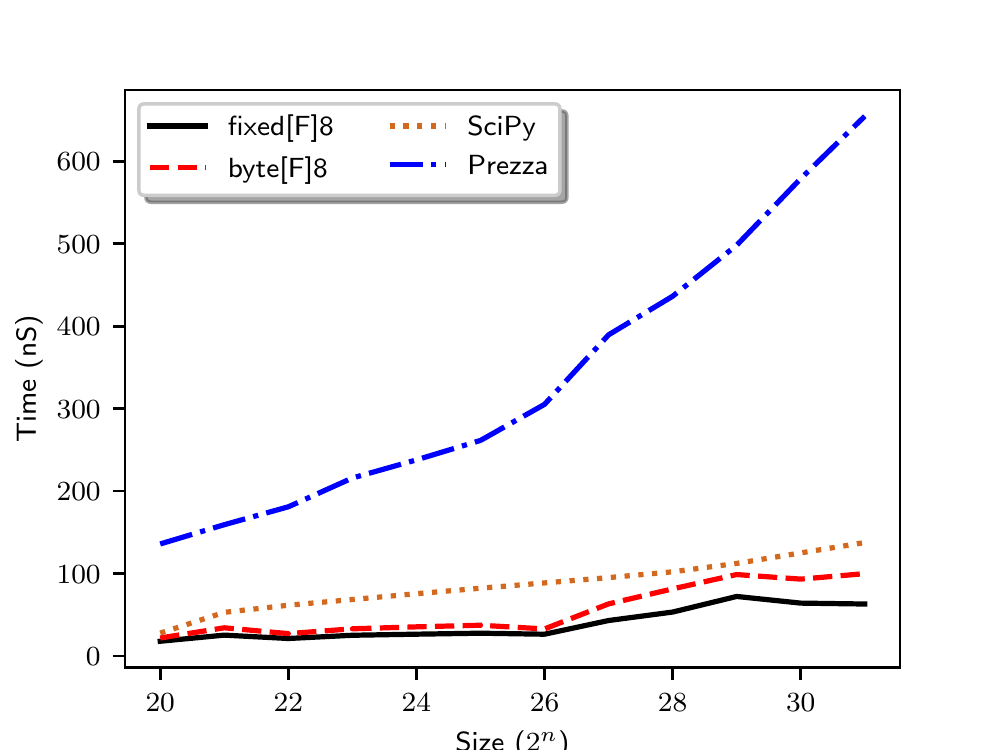}
  \caption{\label{fig:transp}Counting transpositions of a random permutation. Note that the SciPy code uses
  more than fifty times more memory than the alternative structures.}
\end{figure}

\subsection{Generating graphs by preferential attachment}

A simple, classical model of random (undirected) graphs is the \emph{preferential attachment} model:
one fixes an integer $d$, a $d_0\geq d$, and starts
with $d_0$ isolated vertices (with a self-loop). Then, at each iteration a vertex of degree $d$ is added
to the graph, and connected to $d$ vertices chosen among the ones already generated in a way that is
proportional to their degree.
If enough memory is available, a simple way to sample vertices proportionally to the degree is
to maintain a list of vertices in which each vertex is repeated as many times as its degree, and
then just sample from the list. For large graphs, however, the list becomes unmanageable.

The Fenwick tree offers a simple solution: one maintains a tree representing a list $\bm v$ containing the degree 
of each vertex, and at that point a \textsf{find} operation on a uniform random sample in the
integer interval $[0\..2m)$, where $m$ is the current number of edges of the graph, samples vertices proportionally to their degree. 
Thus, we alternate \textsf{find} operations, to find vertices with which to connect, \textsf{add} operations, 
to update their degree, and $\push$ operations, one for each new vertex.  

In Figure~\ref{fig:ab}, we report the time necessary to generate a graph of a million vertices using 
four variants of our data structures for increasing $d$. The level-order layout is a clear winner, as
predicted by the better cache behavior of $\find$. Note that the byte-compressed version is slower
of the fixed-size version, but it uses much less memory, as $B=dn$. This fact is apparently in contradiction
with Figure~\ref{fig:bench_find}: however, in the sampling process the nodes traversed by
\textsf{find} operations are strongly biased towards nodes that lead to vertices of high degree,
so the advantage of the simpler code of the fixed-size version outweighs its larger memory footprint.

\begin{figure}
  \centering
  \includegraphics[width=0.5\linewidth]{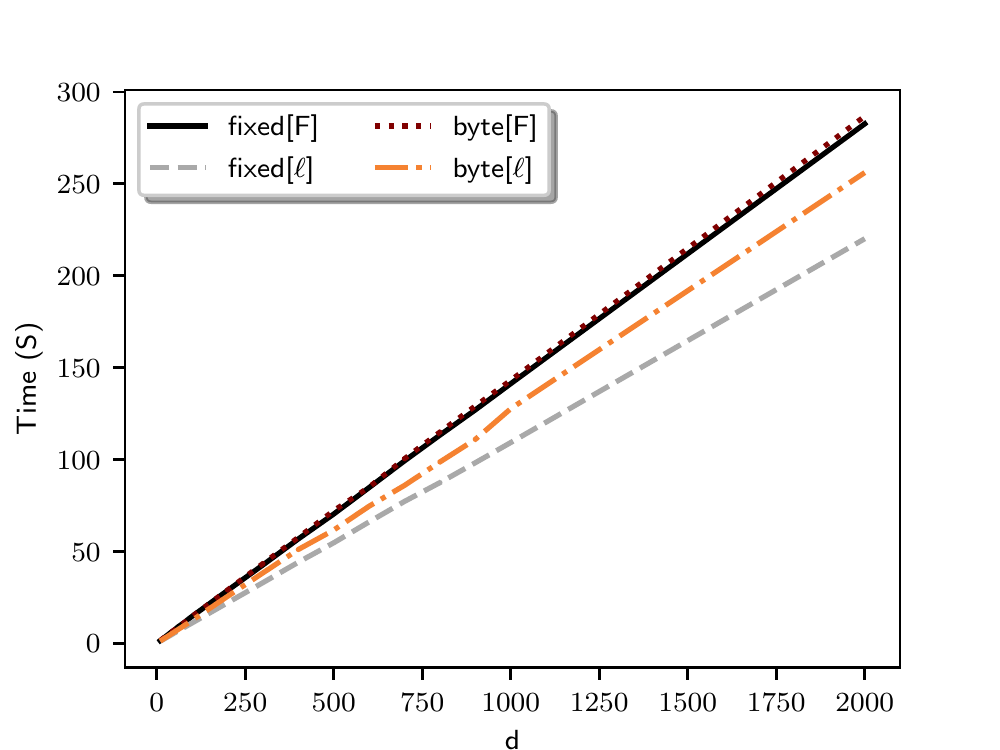}
  \caption{\label{fig:ab}Generating preferential-attachment pseudorandom graphs with a million vertices and average degree $d$ using Fenwick trees. Level-order
  layout has a clear advantage.}
\end{figure}

\section{Conclusions}

We have presented improved, cache-friendly and prediction-friendly variants of
the classical Fenwick tree. Besides maintaining a prefix-sum data structure, the
tree can be used to provide an efficient dynamic bit vector with selection and
ranking with a very small space overhead, albeit size can be changed only
by adding or removing a bit at the end, rather than at an arbitrary position.

The first takeaway lesson from our study is that it is fundamental to perturb
the structure of a Fenwick tree in classical Fenwick layout so that it does not
interfere with the inner working of multi-way caches: while this phenomenon has
never been reported before, the interference can lead to a severe
underperformance of the data structure.

The second lesson is that, in spite of its elegance, the Fenwick
layout is advantageous only if the $\find$ primitive is never used (e.g., when
counting transpositions). In all other cases, a level-order layout is preferable.

Bit-compressed versions are extremely tight, but the higher access time makes them
palatable only when memory is really scarce: otherwise, byte-compressed versions
provide often similar occupancy, but a much faster access. In particular, the faster
$\find$ results we report are for a byte-compressed, level-ordered tree.

For what matter dynamic bit vectors, for the sizes we consider the best solution is a
block large as one or two cache lines; the consideration made
for Fenwick order vs.~level order and compression are valid also in this case.

Albeit not reported in the paper, we also experimented with the idea of
\emph{hybrid} trees, which upper levels are level ordered, but lower
lever have a Fenwick layout. In some architectures, and in particular if only
small memory pages are available, they offer a performance that is intermediate between
that of Fenwick layout and level-order layout. In general, for the upper
levels one can always use a fixed-width Fenwick tree, as the number of nodes involved
is very small, whereas lower levels benefit from compression.

Finally, we remark once again that due to the highly nonlinear effect of cache
architecture, and to conflicting cache usage, benchmarking a target application
with different variants is the best way to choose the variant that better suits the application.

\bibliography{biblio}

\end{document}